\DeclareSymbolFont{rsfscript}{OMS}{rsfs}{m}{n}
\DeclareSymbolFontAlphabet{\mathrsfs}{rsfscript}
\DeclareMathOperator{\dt}{.}
\DeclareMathOperator{\df}{df}
\DeclareMathOperator{\Syn}{Syn}
\DeclareMathOperator{\Cong}{Cong}
\DeclareMathOperator{\Fix}{Fix}
\DeclareMathOperator{\Suff}{Suff}
\DeclareMathOperator{\Fact}{Fact}
\DeclareMathOperator{\Pref}{Pref}
\newcommand{\rf}{\rightarrow}
\newcommand{\la}{\langle}
\newcommand{\ra}{\rangle}
\newcommand{\col}{\textcolor{red}}
\begin{document}
\title{Representation of (Left) Ideal Regular Languages
by Synchronizing Automata}
\author{Marina Maslennikova\inst{1}, Emanuele Rodaro\inst{2}}
\authorrunning{M. Maslennikova, E. Rodaro}
\titlerunning{Representation of Regular Languages by Synchronizing Automata}

\institute{Institute of Mathematics and Computer Science\\Ural Federal University, Ekaterinburg, Russia\\ \and Centro de Matem\'{a}tica, Faculdade de Ci\^{e}ncias\\
        Universidade do Porto, 4169-007 Porto, Portugal\\
\email{maslennikova.marina@gmail.com, emanuele.rodaro@fc.up.pt}}

\maketitle

\begin{abstract}
We follow language theoretic approach to synchronizing automata and \v{C}ern\'{y}'s conjecture initiated in a series of recent papers. We find  a precise lower bound for the reset complexity of a principal ideal languages. Also we show a strict connection between principal left ideals and synchronizing automata. We characterize regular languages whose minimal deterministic finite automaton is synchronizing and possesses a reset word belonging to the recognized language. \\
\textbf{Keywords:} ideal language, synchronizing automaton, reset word, reset complexity, reset left regular decomposition, strongly connected automaton.
\end{abstract}

\section*{Introduction}
Let $\mathscr{A}=\langle Q,\Sigma,\delta\rangle$ be a \textit{deterministic finite automaton} (DFA),
where $Q$ is the \textit{state set}, $\Sigma$ stands for the \textit{input alphabet},
and $\delta: Q\times\Sigma\rightarrow Q$ is the totally defined \textit{transition function} defining
the action of the letters in $\Sigma$ on $Q$. The function $\delta$ is extended uniquely to a function $Q\times \Sigma^*\rightarrow Q$, where $\Sigma^*$ stands for the free monoid over $\Sigma.$ The latter function is still denoted by $\delta.$
In the theory of formal languages the definition of a DFA usually includes the \textit{initial state} $q_0\in Q$ and the set $F\subseteq Q$ of \textit{terminal states}. In this case a DFA is defined as a quintuple $\mathscr{A}=\langle Q,\Sigma,\delta,q_0,F\rangle.$ We will use this definition when dealing with automata as devices for recognizing languages. A language $L\subseteq \Sigma^*$ is said to be \emph{recognized} (or \emph{accepted}) by an automaton $\mathscr{A}=\langle Q, \Sigma, \delta, q_0, F\rangle$ if $L=\{w\in \Sigma^*\mid \delta(q_0,w)\in F\}$, in this case we put $L=L[\mathscr{A}]$.
We also use standard concepts of the theory of formal languages such as regular language, minimal automaton etc.~\cite{Perrin}

A language $I\subseteq\Sigma^*$ is called a \emph{two-sided ideal} (or simply an \emph{ideal}) if $I$ is non-empty and $\Sigma^*I\Sigma^*\subseteq I$. A language $I\subseteq\Sigma^*$ is called a \emph{left} (respectively, \emph{right}) \emph{ideal} if $I$ is non-empty and $\Sigma^*I\subseteq I$ (respectively, $I\Sigma^*\subseteq I$). In what follows we will consider only languages which are regular, thus we will drop the term ``regular'' and henceforth a given language will be implicitly a regular language. If it is said ``ideal language'' or simply ``ideal'', it means that exactly a two-sided ideal language is considered, otherwise it will be explicitly mentioned which class of languages we are focusing on.

A DFA $\mathscr{A}=\langle Q,\Sigma,\delta\rangle$ is called
\emph{synchronizing} if there exists a word~$w\in\Sigma^{*}$ whose action leaves
the automaton in one particular state no matter at which state in $Q$ it is applied, i.e. $\delta(q,w)=\delta(q',w)$
for all $q, q' \in Q$. Any word with this property is said to be
\emph{reset} for the DFA $\mathscr{A}$.
For the last 50 years synchronizing automata received a great deal of attention.
For a brief introduction to the theory of synchronizing automata we refer the reader
to the survey~\cite{Vo_Survey}.

Recently in a series of papers~\cite{SOFSEM, PrincIdFI, ReRo13, DCFS14} a language theoretic (and descriptional complexity) approach to the study of synchronizing automata has been developed. In the present paper we continue to study synchronizing automata from a language theoretic point of view and find a new approach to the \v{C}ern\'{y} conjecture in this way.
We denote by $\Syn(\mathrsfs{A})$ the language of reset words for a given synchronizing automaton $\mathscr{A}$.
It is well known that $\Syn(\mathrsfs{A})$ is a regular language~\cite{Vo_Survey}. Furthermore, it is an ideal in $\Sigma^*$, i.e. $\Syn(\mathrsfs{A})=\Sigma^{*}\Syn(\mathrsfs{A})\Sigma^{*}$.
On the other hand, every ideal language $I$ serves as the language of reset words for some automaton. For instance, the minimal automaton recognizing $I$ is synchronized by $I$~\cite{SOFSEM}. Thus synchronizing automata can be considered as a special representation of ideal languages. The complexity of such a representation is measured by the \emph{reset complexity} $rc(I)$ which is the minimal possible number of states in a synchronizing automaton $\mathscr{A}$ such that $\Syn(\mathscr{A})=I$.
Every such automaton $\mathscr{A}$ is called \textit{minimal synchronizing automaton} (for brevity, MSA). Let $sc(I)$ be the \emph{state complexity} of $I,$ i.e. the number of states in the minimal automaton recognizing $I$.
Since the minimal automaton recognizing $I$ has $I$ as the language of reset words, we clearly have $rc(I)\leq sc(I)$.
Moreover, there are ideals $I_n$ for every $n \geq 3$ such that $rc(I_n) = n$ and $sc(I_n) = 2^n - n$, see~\cite{SOFSEM}.
So representation of an ideal language by means of one of its MSAs can be exponentially more succinct than
its ``traditional'' representation via minimal automaton.
However, no reasonable algorithm is known for computing an MSA for a given language.
One of the obstacles is that MSA is not uniquely defined. Furthermore, the problem of checking, whether a given synchronizing automaton with at least five letters is an MSA for a given ideal language, has recently been shown to be \textbf{PSPACE}-complete~\cite{DCFS14}.

Another source of motivation for studying representations of ideal languages by means
of synchronizing automata comes from the famous \v{C}ern\'{y}'s conjecture~\cite{Ce64}. In 1964
\v{C}ern\'{y} constructed for each $n>1$ a synchronizing $n$-state automaton $\mathscr{C}_n$ whose shortest reset word has length $(n-1)^2.$  Later \v{C}ern\'{y}
conjectured that those automata represent the worst possible case, that is,
every synchronizing automaton with $n$ states possesses a reset word of length at most $(n-1)^2$.
Despite intensive efforts of researchers, this conjecture still remains open. One can
restate easily the
\v{C}ern\'{y} conjecture
in terms of reset complexity.
Let $||I||$ be the minimal length of words in
an ideal language $I.$ The \v{C}ern\'{y} conjecture holds true if and only if $rc(I)\ge \sqrt{||I||}+1$ for every ideal $I$.
The latter inequality would provide the desired quadratic upper bound on the length of the shortest reset word of a synchronizing automaton.

Thus, a deeper study of reset complexity may help to shed light on this longstanding conjecture. In this language theoretic approach to the \v{C}ern\'{y} conjecture, strongly connected synchronizing automata play an important role. Recall that a DFA is called \emph{strongly connected} if for each pair of different states $(p,q)$ there exists a word mapping $p$ to $q.$ It is well known that the \v{C}ern\'{y} conjecture holds true whenever
it holds true for strongly connected automata~\cite{Vo_CIAA07}. In this regard, an interesting question was posed in \cite{PrincIdFI}. The question concerns
the problem of finding a strongly connected synchronizing automaton whose set of reset words is equal to a given ideal language. Indeed, while the minimal automaton recognizing an ideal language $I$ is always a synchronizing automaton with a unique \emph{sink} state (i.e. a state fixed by all letters), finding examples of strongly connected synchronizing automata $\mathrsfs{A}$ with $\Syn(\mathrsfs{A})=I$ is a non-trivial task.
In~\cite{ReRo13} it is proved that such strongly connected automaton always exists for an ideal over alphabet of size at least two. The construction itself is non-trivial and rather technical. Furthermore, the upper bound on the number of states of the associated strongly connected automaton is a double exponential.
The approach of \cite{ReRo13} has the extra advantage of detaching the \v{C}ern\'{y} conjecture from the automata point of view. This is achieved by introducing a purely language theoretic notion of \emph{reset left regular decomposition} of an ideal. This notion will be recalled in Section~\ref{sec: prel}. Here we just focus on the connection between these decompositions and the \v{C}ern\'{y} conjecture. Given an ideal $I$, the size of the smallest reset left regular decomposition of $I$ is denoted by $rdc(I).$ This value can be viewed as the number of states of the smallest strongly connected synchronizing automaton $\mathrsfs{A}$ with $\Syn(\mathrsfs{A})=I$.
It is clear that $rc(I)\le rdc(I)$ and we have
\begin{theorem}\cite[Theorem 6]{ReRoTech}\label{theo: Cerny rdc reform}
\v{C}ern\'{y}'s conjecture holds if and only if for any ideal $I$ we have $rdc(I)\ge \sqrt{||I||}+1$.
\end{theorem}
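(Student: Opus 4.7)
\medskip
\noindent\textbf{Proof plan.} The theorem is essentially a reformulation, so the plan is to unpack the two characterizations of $rdc(I)$ provided in the text (the number of states of the smallest strongly connected synchronizing automaton $\mathscr{A}$ with $\Syn(\mathscr{A})=I$) and of $||I||$ (the length of a shortest reset word, once $I=\Syn(\mathscr{A})$), and then to translate the quadratic bound on reset words into a square-root bound on $rdc(I)$ in both directions. The only non-trivial inputs are the two cited results: the reduction of \v{C}ern\'{y}'s conjecture to strongly connected automata from \cite{Vo_CIAA07}, and the existence result of \cite{ReRo13} ensuring that every ideal over an alphabet of size at least two is indeed of the form $\Syn(\mathscr{A})$ for some strongly connected synchronizing $\mathscr{A}$ (so that $rdc(I)$ is finite).

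\smallskip
\noindent\emph{Forward direction.} Assume \v{C}ern\'{y}'s conjecture and fix an ideal $I\subseteq\Sigma^*$. Pick a strongly connected synchronizing automaton $\mathscr{A}=\langle Q,\Sigma,\delta\rangle$ with $|Q|=rdc(I)$ and $\Syn(\mathscr{A})=I$ (this exists by the characterization of $rdc(I)$ recalled just before the theorem; for unary $\Sigma$ or other degenerate cases where no such strongly connected $\mathscr{A}$ exists, $rdc(I)=\infty$ and the inequality is vacuous). Because $\Syn(\mathscr{A})=I$, the minimal length of a reset word for $\mathscr{A}$ is exactly $||I||$. \v{C}ern\'{y}'s conjecture applied to $\mathscr{A}$ then gives $||I||\le(|Q|-1)^2=(rdc(I)-1)^2$, whence $rdc(I)\ge\sqrt{||I||}+1$.

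\smallskip
\noindent\emph{Backward direction.} Conversely, assume $rdc(I)\ge\sqrt{||I||}+1$ for every ideal $I$. By the reduction of \cite{Vo_CIAA07} it suffices to verify the \v{C}ern\'{y} bound for every strongly connected synchronizing automaton. So let $\mathscr{A}$ be such an automaton with $n$ states, and set $I=\Syn(\mathscr{A})$, which is an ideal. Since $\mathscr{A}$ itself witnesses the decomposition, $rdc(I)\le n$, and the length of a shortest reset word of $\mathscr{A}$ is $||I||$. The hypothesis then yields $n\ge rdc(I)\ge\sqrt{||I||}+1$, i.e.\ the shortest reset word has length at most $(n-1)^2$, as required.

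\smallskip
\noindent\textbf{Where the difficulty lies.} Once the language-theoretic interpretation of $rdc(I)$ is in place, both implications are one-line manipulations; the real work has already been done in \cite{ReRo13} (to make sure $rdc(I)$ is a finite quantity realised by a genuine strongly connected automaton) and in \cite{Vo_CIAA07} (to reduce \v{C}ern\'{y} to the strongly connected case). The only point requiring mild care is the edge case of alphabets on which no strongly connected synchronizing automaton with $\Syn=I$ exists, which I would dispose of by the convention $rdc(I)=\infty$ so that the inequality becomes vacuous and the reduction to strongly connected automata still applies on the \v{C}ern\'{y} side.
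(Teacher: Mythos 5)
Your proof is correct, and it is the natural unwinding of the two reformulations at play. The paper itself does not prove Theorem~\ref{theo: Cerny rdc reform}; it imports it verbatim from the technical report \cite{ReRoTech}, so there is no in-text argument to compare against. Your two directions are exactly what one expects: the forward direction applies the \v{C}ern\'{y} bound to a minimum-size witness $\mathscr{A}$ of $rdc(I)$ (so $\|I\| \le (rdc(I)-1)^2$), and the backward direction first invokes the reduction to strongly connected automata from \cite{Vo_CIAA07}, then uses $\mathscr{A}$ itself as a decomposition of $I=\Syn(\mathscr{A})$ to get $rdc(I)\le n$ and hence $\|I\|\le(n-1)^2$. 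Your handling of the degenerate unary case (setting $rdc(I)=\infty$ when no strongly connected witness exists, while noting \v{C}ern\'{y} is trivial for such automata) is a sensible convention consistent with the existence result of \cite{ReRo13} over alphabets of size at least two. Nothing further is needed.
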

Therefore,
the importance of the studies of issues like finding more effective constructions of reset left regular decompositions (or equivalently their associated automata) is evident.

Another interesting observation is the following. For each $n\geq 3$ the corresponding MSA's for the aforementioned ideals $I_n$ (with $rc(I_n)$ and $sc(I_n)=2^n-n$) turned out to be strongly connected.
Thus one may expect that there always exists a strongly connected MSA for an ideal language. However, in~\cite{GuMasPribFG} it has been shown that a strongly connected MSA for a given ideal language does not always exist. Moreover, there are ideals $J_n$ for every $n \geq 3$ such that $rc(J_n) = n+1$ and $rdc(J_n) = 2^n$. Thus the smallest strongly connected automaton having a given ideal language $I$ as the
language of reset words may be exponentially larger than an MSA for $I$.




Recall that an ideal $I$ is called \emph{finitely generated} if $I=\Sigma^*U\Sigma^*$ for some finite set $U\subseteq\Sigma^*$.
Such languages have been viewed as languages of reset words of synchronizing automata in \cite{PriRoMinWords,PriRo11}. Note that the aforementioned languages $J_n$ are finitely generated ideals. In~\cite{PrincIdFI} it is considered the partial case of \emph{principal} ideal languages, i.e. languages of the form $\Sigma^*w\Sigma^*,$ for some $w\in\Sigma^{*}$. If $|w|$ denotes the length of $w\in\Sigma^*$, then we have

\begin{theorem}[\cite{PrincIdFI}]\label{theorem: princ}
For the language $\Sigma^*w\Sigma^*,$ there is a strongly connected automaton $\mathrsfs{B}$ with $|w|+1$ states, such that
$\Syn(\mathrsfs{B})=\Sigma^*w\Sigma^*$. Such an automaton can be constructed in $O(|w|^2)$ time.
\end{theorem}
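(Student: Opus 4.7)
The plan is to construct $\mathrsfs{B}$ explicitly on the state set $Q = \{q_0, q_1, \ldots, q_n\}$, where $n = |w|$, so that the pattern $w = w_1 \cdots w_n$ labels a spine $q_0 \xrightarrow{w_1} q_1 \xrightarrow{w_2} \cdots \xrightarrow{w_n} q_n$. I would first install the spine transitions $\delta(q_{i-1}, w_i) = q_i$ for $1 \leq i \leq n$, and then fill in the remaining entries $\delta(q_i, a)$ (for $a \neq w_{i+1}$, and for every letter out of $q_n$) using a Knuth--Morris--Pratt--style failure-link prescription tailored to $w$. The essential departure from the minimal DFA of $\Sigma^* w \Sigma^*$, in which $q_n$ is a sink, is that the out-transitions of $q_n$ are redirected using the failure function of $w$, so that a return path from $q_n$ into earlier states of the spine is created while $w$ is preserved as a reset word.

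With the construction fixed, I would verify two core properties. First, $\mathrsfs{B}$ is strongly connected: $q_0$ reaches every $q_i$ along the prefix $w_1 \cdots w_i$, while the rerouted transitions out of $q_n$ together with the failure links at intermediate states yield a return path from $q_n$ back down to $q_0$. Second, $w \in \Syn(\mathrsfs{B})$, that is, $\delta(q_i, w) = q_n$ for every $i$, which is checked by tracing the KMP-style evolution from each starting state $q_i$ and verifying that the failure-compatible redirection at $q_n$ forces the suffix $w_1 \cdots w_n$ to land back at $q_n$ even when the computation happens to visit $q_n$ in the middle. Combined with the fact that $\Syn(\mathrsfs{B})$ is always a two-sided ideal, this yields the easy inclusion $\Sigma^* w \Sigma^* \subseteq \Syn(\mathrsfs{B})$.

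The main obstacle is the reverse inclusion $\Syn(\mathrsfs{B}) \subseteq \Sigma^* w \Sigma^*$, namely that no word avoiding $w$ as a factor can be a reset word; this will in general force some additional tuning of the redirections at $q_n$ beyond a naive copy of the failure links. I would argue by contrapositive: for $u \in \Sigma^*$ not containing $w$, I would exhibit two states $q_i, q_j$ with $\delta(q_i, u) \neq \delta(q_j, u)$. The key input is the KMP invariant that, when starting at $q_0$, the state $\delta(q_0, u)$ records the length of the longest suffix of $u$ that is a prefix of $w$; since $u$ does not contain $w$, this value never reaches $n$. Comparing this with the trajectory from a second state $q_j$ chosen so that a complete match of $w$ would be forced if $u$ did contain it then yields the desired disagreement, ruling $u$ out of $\Syn(\mathrsfs{B})$. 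Finally, for complexity: the failure function of $w$ is computable in $O(|w|)$ time, and the $n+1$ rows of the transition table can be filled in by an amortised $O(|w|)$ traversal each, for a total construction time of $O(|w|^2)$.
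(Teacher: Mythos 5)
The theorem you are proving is only cited in the paper (from~\cite{PrincIdFI}), so there is no in-paper proof to compare against; what I can assess is whether your plan stands on its own, and it does not. Your construction keeps the transitions of the minimal DFA of $\Sigma^*w\Sigma^*$ at every state $q_0,\ldots,q_{n-1}$ and only reroutes the out-transitions of $q_n$; even after the ``additional tuning'' you allude to, such an automaton cannot in general have $\Syn(\mathrsfs{B})=\Sigma^*w\Sigma^*$. Take $w=ab$ over $\Sigma=\{a,b\}$. The minimal DFA (equivalently the KMP prescription) forces $\delta(q_0,a)=\delta(q_1,a)=q_1$, so $\{q_0,q_1\}\dt a=\{q_1\}$ already. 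Whatever value you assign to $\delta(q_2,a)$: if it is $q_1$, then $a$ is a reset word but $a\notin\Sigma^*ab\Sigma^*$; if it is $q_2$, then $q_2\dt b$ must also be $q_2$ in order for $ab$ to remain reset, making $q_2$ a sink and destroying strong connectivity; if it is $q_0$, then $\{q_0,q_2\}$ is permuted by $ab$, so $ab$ is no longer a reset word. The same phenomenon occurs for $w=aba$, where in $\mathrsfs{A}_w$ the word $aa$ is already synchronizing because $q_1\dt a=q_3\dt a=q_1$ --- a collapse that happens entirely away from $q_n$ and hence cannot be cured by retouching $q_n$'s out-transitions alone.

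The missing idea is that the fail-transitions at the \emph{intermediate} states must also deviate from the KMP failure function; otherwise letter images collapse too early, producing short reset words that avoid $w$. Correspondingly your reverse-inclusion argument --- comparing $\delta(q_0,u)$ with $\delta(q_j,u)$ via the KMP invariant --- is not enough, because two trajectories can coincide in a state other than $q_n$ (e.g.\ at $q_1$ after reading $a$ for $w=ab$), and a trajectory started at $q_j$ may pass through $q_n$ even when $u$ avoids $w$. A correct proof must specify the non-spine transitions in a way that keeps $|Q\dt u|\ge 2$ for every factor-$w$-avoiding $u$, which forces a genuinely different prescription than the one described. The complexity claim $O(|w|^2)$ is plausible once the construction is pinned down, but as written the construction itself is not correct.
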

In the present paper we enforce the previous result by showing that the automaton $\mathrsfs{B}$ from Theorem~\ref{theorem: princ} is actually an MSA for a given language. More precisely, we prove that $rdc(I)=rc(I)= \|I\|+1$, for every principal ideal language $I.$ In particular, this result solves an open question posed in~\cite{PrincIdFI} regarding the size of the minimal strongly connected synchronizing automaton for which a given principal ideal language serves as the language of reset words.
We show that \emph{principal left ideals}, i.e. ideals of the form $\Sigma^{*}w$ for some word $w$, play also a fundamental role in \v{C}ern\'{y}'s conjecture. Indeed, we characterize strongly connected synchronizing  automata via homomorphic images of automata belonging to a particular class $\mathcal{L}(\Sigma)$ of automata.
The class $\mathcal{L}(\Sigma)$ is formed by all the trim automata $\mathrsfs{A}=\langle Q,\Sigma,\delta,q_0, \{q_{0}\}\rangle$ such that $L[\mathrsfs{A}]=w^{-1}\Sigma^{*}w$ for some word $w\in\Sigma^{*}$. In Section~\ref{sec: lower principal} we reduce Cern\'{y}'s conjecture to the same conjecture for the quotients of automata from the class $\mathcal{L}(\Sigma)$.
In view of this connection we study automata recognizing languages of the form $w^{-1}\Sigma^{*}w$ for some $w\in\Sigma^{*}$.
We provide a compact formula to calculate the syntactic complexity of a language $I=w^{-1}\Sigma^*w$. This value is defined just by the length of $w$ and by the quantity of distinct prefixes, suffixes and factors in $w$.
Another interesting feature of such languages concerns the construction of the minimal automaton $\mathscr{A}_w$ recognizing the language $w^{-1}\Sigma^*w$. It turns out that $w\in\Syn(\mathscr{A}_w).$ Thus, in this context, we have that a word of the language recognized by the automaton is also a reset word for this automaton. Hence it is quite natural to ask in which cases the minimal automaton recognizing a given regular language $L$ is synchronized by some word from $L$. Here we answer this question by proving a criterion for the minimal automaton recognizing $L$ to be synchronized by some word from $L$. We state this criterion in terms of the notion of a constant of $L$ introduced by Sch\"{u}tzenberger~\cite{Schutz}.
The notion of a constant is widely studied and finds applications in bioinformatics and coding theory \cite{BFZ,SynchSymp}. 

\section{Preliminaries}\label{sec: prel}
Let $\mathrsfs{A}=\la Q,\Sigma,\delta,q_{0},F\ra$ be a deterministic finite automaton. The corresponding triple $\la Q,\Sigma,\delta\ra,$ where the initial state and the set of final states are deliberately omitted, is called the \emph{underlying semiautomaton} of $\mathscr{A}.$ If the transition function $\delta$ is clear from the context, we will write $q\dt w$ instead of $\delta(q,w)$ for $q\in Q$ and $w\in\Sigma^{*}$. This notation extends naturally to any subset $H\subseteq Q$ by putting $H\dt w=\{\delta(q,w) \mid q\in H\}$. A DFA $\mathrsfs{A}=\la Q,\Sigma,\delta,q_{0},F\ra$ is called \emph{trim} whenever each state $q\in Q$ is reachable from $q_0$ and each state $t\in F$ is reachable from some state $q\in Q$.

In our context a (automaton) \emph{homomorphism} $\varphi:\mathrsfs{A}\rf\mathrsfs{B}$  between the DFAs $\mathrsfs{A}=\la Q,\Sigma,\delta\ra$ and $\mathrsfs{B}=\la T,\Sigma,\xi\ra$ is a map $\varphi:Q\rf T$ preserving the action of letters, i.e. $\varphi(\delta(q,a))=\xi(\varphi(q),a)$ for all $a\in\Sigma$. Note that $\varphi(Q)$ identifies a sub-automaton of $\mathrsfs{B}$ denoted by $\varphi(\mathrsfs{A})$, and we say that $\varphi(\mathrsfs{A})$ is a \emph{homomorphic image} of $\mathrsfs{A}$.
A binary relation $\rho\subseteq Q\times Q$ is a \emph{congruence} for the automaton $\mathrsfs{A}=\la Q,\Sigma,\delta,q_{0},F\ra$ if  $(q_{1},q_{2})\in\rho $ implies $(\delta(q_{1},u), \delta(q_{2},u))\in\rho$ for all $u\in\Sigma^{*}$, $q_{1},q_{2}\in Q$. The \emph{quotient automaton} of a DFA $\mathscr{A}$ with respect to a congruence $\rho$ is denoted by $\mathrsfs{A}/\rho=\la Q/\rho,\Sigma,\delta',[q_{0}],F/\rho\ra$, where $[q]$ denotes the $\rho$-class containing $q$, and the transition function $\delta':Q/\rho\times \Sigma\rightarrow Q/\rho$ is defined be the rule $\delta'([q],u)=[\delta(q,u)]$, for all $u\in\Sigma^*$, $q\in Q$. We denote by $\Cong(\mathrsfs{A})$ the set of all the congruences of the DFA $\mathrsfs{A}$, the \emph{index} of a congruence $\rho\in\Cong(\mathrsfs{A})$ is the cardinality of the state set of $\mathrsfs{A}/\rho$. For any integer $k$, we use the symbol $\Cong_{k}(\mathrsfs{A})$ to denote the (possibly empty) set of congruences on $\mathrsfs{A}$ of index $k$.

Denote the $i$-th letter of a word $w\in \Sigma^+$ by $w[i]$ and the prefix $w[1]w[2]\ldots w[i]$ by $w[1..i]$. For indices $1\le i<j\le |w|$ we use the notation $w[i..j]$ to indicate the factor $w[i]w[i+1]\ldots
w[j]$. If $1\le i<j$ then we put $w[j..i]=\varepsilon$. For $u,w\in \Sigma^{*}$ we say that $u$ is a \emph{prefix}, (\emph{suffix} or \emph{factor}, respectively) of $w$ if $w=uu_2$ ($w=u_1u$ or $w=u_1uu_2$, respectively) for some $u_1,u_2\in\Sigma^{*}$. We also write $u\le _{p}w$ ($u\le_{s}w$ or $u\le_{f}w$, respectively) if $u$ is a prefix (suffix or a factor of $w$, respectively). We write $u<_{p}w$ ($u<_{s}w$ or $u<_{f}w$) if $u$ is a proper prefix (suffix or factor, respectively) of $w$. For a given language $L\subseteq \Sigma^*$ and $w\in\Sigma^*$ we put $Lw=\{xw\mid x\in L\}$, $wL=\{wx\mid x\in L\}$. The \emph{left} (\emph{right}) \emph{quotient} of $L$ by a word $w$ is the set $w^{-1}L=\{v\in\Sigma^{*}:wv\in L\}$ ($Lw^{-1}=\{v\in\Sigma^{*}:vw\in L\}$).
We recall the following definition from~\cite{ReRo13}:
\begin{definition}\label{defn: regular dec}
  A reset left regular decomposition is a collection $\{I_i\}_{i\in F}$ of
  disjoint left ideals $I_i$ on $\Sigma^*$, for some finite set $F$, satisfying the following two conditions.  \begin{itemize}
  \item[i)] For any $a\in\Sigma$ and $i\in F$, there is an index $j\in F$
    such that $I_ia\subseteq I_j$.
  \item[ii)]  Let $I=\uplus_{i\in F} I_{i}$. For any $u\in\Sigma^*$ if
    there is an $i\in F$ such that $Iu\subseteq I_i $, then $u\in I$.
  \end{itemize}
\end{definition}
Denote by $\textbf{RLD}_{\Sigma}$ the class of the reset left regular decompositions over $\Sigma$. The notation $\textbf{SCSA}_{\Sigma}$ stands for the class of all strongly connected synchronizing automata over $\Sigma$. In~\cite{ReRo13} it has been shown that an ideal language $I$ is strongly connected if and only if it has a reset left regular decomposition.
The proof of this statement provides a bijection between the classes $\textbf{RLD}_{\Sigma}$ and $\textbf{SCSA}_{\Sigma}$. This fact was stated in the following theorem.
\begin{theorem}[Theorem 4,~\cite{ReRo13}]\label{theo: characterization 2}
The map $\mathcal{A}:\textbf{RLD}_{\Sigma}\rightarrow\textbf{SCSA}_{\Sigma}$ defined by the rule
$$
  \mathcal{A}:\{I_i\}_{i\in F}\mapsto \mathcal{A}(\{I_i\}_{i\in F})=\la \{I_i\}_{i\in F},\Sigma,\eta\ra
$$
with $\eta(I_i,a)=I_j$ for $a\in\Sigma$ if and only if  $I_{i}a\subseteq I_{j}$ is a bijection with inverse given by $\mathcal{I}:\textbf{SCSA}_{\Sigma}\rf\textbf{RLD}_{\Sigma}$ defined by the rule
$$
  \mathcal{I}:\mathrsfs{B}=\la Q,\Sigma,\delta\ra\mapsto \{I_q\}_{q\in Q}=\{\{u\in\Sigma^*: \delta(Q,u)=q\}\}_{q\in Q}.
$$
\end{theorem}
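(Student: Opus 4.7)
The plan is to verify that both $\mathcal{A}$ and $\mathcal{I}$ are well-defined, that they land in the claimed codomains, and then to check that the two composites are the identity (up to the evident isomorphism for $\mathcal{A}\circ\mathcal{I}$). The whole argument is driven by two simple facts repeatedly invoked: each $I_i$ in a decomposition is a left ideal (so $I_iu\subseteq I_j$ whenever $u\in I_j$), and condition (ii) of Definition~\ref{defn: regular dec} is precisely the converse---whenever $u$ acts uniformly on $I$, it must lie in $I$.

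For $\mathcal{A}$: starting from a decomposition $\{I_i\}_{i\in F}$, condition (i) together with disjointness (and the fact that each $I_i$ is nonempty, being a left ideal) yields that $\eta$ is well-defined. Given two states $I_i,I_j$ and any $u\in I_j$, the left-ideal property of $I_j$ gives $I_iu\subseteq \Sigma^*I_j\subseteq I_j$, so $\eta(I_i,u)=I_j$, proving strong connectedness. The same argument shows any $u\in I_i$ is a reset word sending every state to $I_i$, hence $I\subseteq\Syn(\mathcal{A}(\{I_i\}))$; conversely any reset word $u$ satisfies $Iu\subseteq I_j$ for some $j$, so condition (ii) forces $u\in I$, establishing $\Syn(\mathcal{A}(\{I_i\}))=I$.

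For $\mathcal{I}$: starting from $\mathrsfs{B}=\la Q,\Sigma,\delta\ra\in\textbf{SCSA}_{\Sigma}$, the sets $I_q=\{u:\delta(Q,u)=q\}$ are nonempty (concatenate any reset word with a connecting word provided by strong connectedness), pairwise disjoint, and closed under left-multiplication by $\Sigma^*$, hence left ideals. Condition (i) of Definition~\ref{defn: regular dec} is immediate with $j=\delta(q,a)$. The delicate step is condition (ii): if $Iu\subseteq I_q$ with $I=\Syn(\mathrsfs{B})$, then $wu\in I_q$ for every $w\in I$; strong connectedness guarantees that the singletons $\delta(Q,w)$ exhaust $Q$ as $w$ ranges over $I$, so $\delta(p,u)=q$ for every $p\in Q$, i.e.~$u\in I_q\subseteq I$.

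Finally, $\mathcal{A}\circ\mathcal{I}$ is the identity on $\textbf{SCSA}_{\Sigma}$ under $q\leftrightarrow I_q$, since the computation above gives $\eta(I_q,a)=I_{\delta(q,a)}$. For $\mathcal{I}\circ\mathcal{A}$, the recovered sets $J_i=\{u:I_ku\subseteq I_i\text{ for every }k\in F\}$ satisfy $I_i\subseteq J_i$ by the left-ideal argument and $J_i\subseteq I_i$ by condition (ii) combined with disjointness, giving $J_i=I_i$. The only real content beyond routine bookkeeping is the use of condition (ii): it is exactly the axiom distinguishing reset left regular decompositions from arbitrary partitions into left ideals, and it must be deployed at the two places where a converse to the left-ideal argument is required---namely, proving $\Syn\subseteq I$ on the $\mathcal{A}$ side, and proving $J_i\subseteq I_i$ on the inverse check.
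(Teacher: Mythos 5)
The paper does not prove Theorem~\ref{theo: characterization 2}: it is imported verbatim as~\cite[Theorem~4]{ReRo13}, so there is no in-paper argument to compare against. Your proof is correct and complete as a verification of the cited bijection: you check well-definedness of $\eta$ via condition (i) plus disjointness, strong connectedness and $\Syn(\mathcal{A}(\{I_i\}))=I$ for the forward map, the left-ideal/disjointness/conditions (i)--(ii) axioms for $\mathcal{I}(\mathrsfs{B})$, and both composites; you also correctly identify the two spots where condition (ii) is the essential input (the inclusion $\Syn\subseteq I$ and the inclusion $J_i\subseteq I_i$). The only small gloss worth making explicit is that in $\mathcal{I}\circ\mathcal{A}=\mathrm{id}$, the step $J_i\subseteq I_i$ needs both condition (ii) (to place $u$ in some $I_j$) and the left-ideal property of that $I_j$ together with nonemptiness of some $I_k$ to conclude $i=j$ by disjointness, which you compress into ``condition (ii) combined with disjointness''---that is fine but is doing two things at once.
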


\section{Lower bounds for the reset complexity of principal ideal languages}\label{sec: lower principal}
In this section we prove that $rdc(I)=rc(I)\ge n+1$ for a principal ideal language $I=\Sigma^{*}w\Sigma^{*}$ with $|w|=n$. First we recall some auxiliary facts and definitions from~\cite{PriRoMinWords}. Let us consider an automaton $\mathrsfs{A}=\langle Q,\Sigma,\delta\rangle$. For a word $u\in\Sigma^{*}$, the \emph{maximal fixed set} $m(u)$ is the largest subset of $Q$ fixed by $u$, i.e. $m(u)\dt u=m(u)$. Note that $m(u)=Q\dt u^{k(u)}$ for some minimal integer $k(u)$ and it is not difficult to see that $k(u)\le |Q|-|m(u)|$ (see~\cite[Lemma 2]{PriRoMinWords}).
A synchronizing DFA $\mathscr{A}=\langle Q,\Sigma,\delta\rangle$ is called \emph{finitely generated} if the language $\Syn(\mathscr{A})$ is a finitely generated ideal. The following theorem is proved using the same technique of~\cite[Theorem 4]{PriRoMinWords}, for the sake of completeness we present the proof in the appendix.
\begin{theorem}\label{Theo: form of synch word}
Let ${\mathrsfs A}=\langle
Q,\Sigma,\delta\rangle$ be a finitely generated synchronizing automaton with $|Q|=n$. Then for any word $v\in\Sigma^{+}$ we have that either $v^{k(v)}\in\Syn(\mathscr{A})$, or there is a word $\tau$ with $|\tau|\le n-1$, such that $v^{k(v)}\tau v^{k(v)}\in\Syn(\mathscr{A})$.
\end{theorem}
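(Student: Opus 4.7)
The plan is to argue by contradiction after disposing of the trivial case. First, suppose that $v^{k(v)}\notin\Syn(\mathrsfs{A})$; otherwise the first alternative of the theorem holds. Set $\alpha:=v^{k(v)}$ and $M:=m(v)=Q\dt\alpha$. By hypothesis $|M|\ge 2$, and by the definition of $m(v)$ the word $\alpha$ restricts to a permutation of $M$ (in particular, $\alpha$ is injective on $M$). Because $\Syn(\mathrsfs{A})$ is a two-sided ideal and $\mathrsfs{A}$ is synchronizing, any reset word $w$ produces $\alpha w\alpha\in\Syn(\mathrsfs{A})$, so the set $W=\{\tau\in\Sigma^{*}:\alpha\tau\alpha\in\Syn(\mathrsfs{A})\}$ is nonempty. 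Fix $\tau\in W$ of minimum length and write $t=|\tau|$; the goal is to show $t\le n-1$.

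For $0\le i\le t$ put $P_i=M\dt\tau[1..i]\subseteq Q$, so that $P_0=M$ and $|P_t\dt\alpha|=1$. Two easy monotonicity facts drive the proof. First, $P_{i+1}=P_i\dt\tau[i+1]$ gives $|P_0|\ge|P_1|\ge\cdots\ge|P_t|$. Second, the minimality of $\tau$ forces the sets $P_0,\ldots,P_t$ to be pairwise distinct: if $P_i=P_j$ for some $i<j$, then the word $\tau':=\tau[1..i]\tau[j+1..t]$ satisfies $M\dt\tau'=P_i\dt\tau[j+1..t]=P_j\dt\tau[j+1..t]=P_t$, so $\alpha\tau'\alpha\in\Syn(\mathrsfs{A})$ with $|\tau'|<t$, a contradiction. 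Together with $|P_i\dt\alpha|\ge 2$ for every $i<t$ (again by minimality of $\tau$), this pins down a rigid chain of subsets.

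The key step is to upgrade ``pairwise distinct $P_i$'' to ``strictly decreasing $|P_i|$'', which immediately yields $t\le|M|-1\le n-1$. Here I would invoke the finite generation hypothesis and write $\Syn(\mathrsfs{A})=\Sigma^{*}U\Sigma^{*}$ with $U$ finite. Since $\alpha\tau\alpha\in\Syn(\mathrsfs{A})$, some $u\in U$ occurs as a factor of $\alpha\tau\alpha$; since $v^{k(v)}\notin\Syn(\mathrsfs{A})$, no $u\in U$ can lie entirely inside either copy of $\alpha$, so $u$ must meet the middle block $\tau$. In a step $i$ where $|P_{i+1}|=|P_i|$ the letter $\tau[i+1]$ acts injectively on $P_i$, and because $\alpha$ permutes $M$, such a step can be re-routed: one replaces $\tau[i+1]$ by a suitable power $v^{s}$ of $v$ (using that $v$ acts as a permutation on $M$ and absorbs into the tail $\alpha$ modulo the cycle structure of $\alpha|_M$), producing a strictly shorter $\tau'\in W$ and contradicting the minimality of $\tau$. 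Iterating this argument at every non-strict step yields the desired strict decrease of the cardinalities, and the bound $t\le n-1$ follows.

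The main obstacle is precisely this last reduction from distinctness of the $P_i$'s to strict monotonicity of $|P_i|$: a purely set-theoretic pigeonhole on subsets of $Q$ only delivers the useless bound $t\le 2^{n}$, so the gain from $2^{n}$ down to $n-1$ has to come from the interplay between the permutation action of $\alpha$ on $M$ and the finite generation of $\Syn(\mathrsfs{A})$, which is the technical heart of the proof and the one copied from \cite[Theorem~4]{PriRoMinWords}.
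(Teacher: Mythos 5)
Your setup (set $\alpha=v^{k(v)}$, $M=m(v)$, track $P_i=M\dt\tau[1..i]$ along a shortest $\tau$ with $\alpha\tau\alpha\in\Syn(\mathrsfs{A})$, and observe that the $P_i$ are pairwise distinct and $|P_i\dt\alpha|\ge 2$ for $i<t$) is sound as far as it goes, but the crucial step, turning pairwise-distinctness of the $P_i$ into strict decrease of $|P_i|$, is where the argument breaks down, and you acknowledge this yourself. The ``re-routing'' sketch does not produce a shorter $\tau'$: replacing a single letter $\tau[i+1]$ by some power $v^s$ typically lengthens the word rather than shortening it, and there is no argument offered for why a suitable power of $v$ could replicate the effect of the letter $\tau[i+1]$ on $P_i$ at all, let alone with a net gain in length. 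Moreover, the remark that some $u\in U$ (a generator of $\Syn(\mathrsfs{A})$) must ``meet the middle block'' is not converted into any usable constraint on the $P_i$; finite generation is invoked in spirit but never actually applied. As you note, without the strict-decrease claim the pigeonhole only gives $t\le 2^n$, so the bound $t\le n-1$ is not established.

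The paper's proof proceeds quite differently and avoids the cardinality-chain argument entirely. It relies on two imported results: the combinatorial characterization of finitely generated synchronizing automata from \cite[Theorem 1]{PriRoMinWords}, which yields $\Syn(S)=\Syn(m(v))$ for every $S$ reachable from $m(v)$ with $|S|>1$; and the deficiency lemma of \cite{MPV04}, which for two words of equal deficiency $k>1$ provides a gluing word $\tau$ of length at most $k+1$ that strictly increases the deficiency. The paper picks a reachable set $H=Q\dt v^{k(v)}u\subseteq m(v)$ of \emph{minimal} cardinality, applies the deficiency lemma to $v^{k(v)}u$ against itself to get $\tau$ with $|\tau|\le n-|H|+1\le n-1$, and then uses the equality $\Syn(S)=\Syn(m(v))$ twice (first to conclude the deficiency jump already forces synchronization, then to strip away the auxiliary word $u$) to show that $\tau v^{k(v)}\in\Syn(m(v))$, i.e.\ that $v^{k(v)}\tau v^{k(v)}$ is reset. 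So the two proofs diverge at exactly the point where your sketch stalls: rather than trying to show the cardinalities along a shortest $\tau$ decrease at every step (which need not hold), the paper sidesteps minimality of $\tau$ altogether and \emph{constructs} a short $\tau$ directly from the deficiency lemma, using $\Syn(S)=\Syn(m(v))$ to transfer synchronization between different reachable subsets of $m(v)$. To repair your proposal you would need to either prove the strict-decrease claim, which appears false in general, or switch to an argument of the paper's type that invokes Theorem~1 of \cite{PriRoMinWords} and the deficiency lemma.
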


We are now in position to prove the main theorem of this section.
\begin{theorem}\label{theo: princ rc}
Let $I=\Sigma^{*}w\Sigma^{*}$ be a principal ideal language, then $rc(I)=|w|+1$.
\end{theorem}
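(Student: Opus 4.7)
The plan is to prove the lower bound $rc(I) \ge |w|+1$; the matching upper bound $rc(I) \le |w|+1$ is already supplied by Theorem~\ref{theorem: princ}. Let $\mathrsfs{A} = \la Q, \Sigma, \delta\ra$ be any synchronizing automaton with $\Syn(\mathrsfs{A}) = I$ and set $n = |Q|$; the goal is to show $n \ge |w|+1$.

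A preliminary observation is that $w$ is the unique shortest reset word of $\mathrsfs{A}$: every reset word belongs to $\Sigma^{*} w \Sigma^{*}$ and hence has length at least $|w|$, while the only word of length exactly $|w|$ in $\Sigma^{*} w \Sigma^{*}$ is $w$ itself. I would then examine the subset trajectory $q_i := Q \dt w[1..i]$ for $0 \le i \le |w|$, which satisfies $q_0 = Q$ and $q_{|w|} = \{q^{*}\}$ for the common sync state $q^{*}$. The subsets $q_0, q_1, \ldots, q_{|w|}$ are pairwise distinct: if $q_i = q_j$ for some $i < j$, then $w[1..i] \dt w[j+1..|w|]$ would be a reset word of length $|w| - (j - i) < |w|$, contradicting minimality.

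The decisive step is the strict-decrease claim $|q_i| > |q_{i+1}|$ for every $0 \le i < |w|$; granted this, the chain $n = |q_0| > |q_1| > \cdots > |q_{|w|}| = 1$ immediately yields $n \ge |w|+1$. I would establish the strict decrease by contradiction: assume $|q_i| = |q_{i+1}|$ for some $i$, put $\sigma = w[i+1]$, and note that $\sigma$ then acts bijectively on $q_i$. Two sub-cases are handled cleanly by Theorem~\ref{Theo: form of synch word}. If $w = a^{|w|}$ for a single letter $a$, applying the theorem to $v = a$, together with $|m(a)| = 1$ (a consequence of $a^{|w|}$ being reset), gives $a^{k(a)} \in \Syn(\mathrsfs{A}) = \Sigma^{*} a^{|w|} \Sigma^{*}$, so $k(a) \ge |w|$; combined with $k(a) \le n - |m(a)| = n - 1$ this already forces $|w| \le n - 1$. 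If instead $w$ uses at least two distinct letters and some letter $\sigma_0 \in \Sigma$ does not appear in $w$, applying the theorem to $v = \sigma_0$ produces $\sigma_0^{k(\sigma_0)} \tau \sigma_0^{k(\sigma_0)} \in \Syn(\mathrsfs{A})$ with $|\tau| \le n - 1$; since the $\sigma_0$-blocks consist only of $\sigma_0$'s while $w$ contains none, the factor $w$ must sit entirely inside $\tau$, whence $|w| \le |\tau| \le n - 1$.

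The main obstacle is the remaining sub-case, in which $w$ uses at least two distinct letters and every letter of $\Sigma$ already appears in $w$. Here I expect to exploit the bijective action of $\sigma$ on $q_i$ together with a more delicate choice of $v$ in Theorem~\ref{Theo: form of synch word} (for instance a suffix of $w$ starting near position $i+1$), the aim being to produce an element of $\Syn(\mathrsfs{A})$ that, combined with the constraint that $w$ is the unique word of length $|w|$ in $\Sigma^{*} w \Sigma^{*}$, yields a reset word strictly shorter than $|w|$. This would contradict the uniqueness of $w$ as the shortest reset word, and represents the technical kernel of the argument.
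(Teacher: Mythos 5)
Your proposal correctly handles the upper bound via Theorem~\ref{theorem: princ} and makes a sound preliminary observation (every state of $\mathrsfs{A}$ must appear in the chain $Q\dt w[1..i]$, and those subsets are pairwise distinct, since a repeat would yield a reset word shorter than $|w|$). The two side cases you settle are also fine: for $w=a^{|w|}$, the bound $k(a)\le n-|m(a)|=n-1$ together with $a^{k(a)}\in\Sigma^{*}a^{|w|}\Sigma^{*}$ forcing $k(a)\ge|w|$ is a clean argument (different from the paper, which simply cites \cite[Lemma 1]{SOFSEM} here), and the case of a letter $\sigma_{0}$ absent from $w$ works. But the case you leave open --- $w$ non-unary with every letter of $\Sigma$ occurring in $w$ --- is precisely the main case, and the strict-decrease claim $|q_{i}|>|q_{i+1}|$ is exactly as hard as the statement you want to prove; you have not reduced it. In fact nothing in your outline forces the chain to shrink at each step: the pairwise-distinctness observation lives in $2^{Q}$, so having $|w|+1$ distinct subsets does not bound $|Q|$, and the bijective action of $w[i+1]$ on $q_{i}$ is not, by itself, in tension with anything. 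Your suggested remedy --- applying Theorem~\ref{Theo: form of synch word} with $v$ a suffix of $w$ near position $i$ --- is unlikely to close the gap, because if $v$ is a suffix (or factor) of $w$, then $w$ \emph{can} appear as a factor of $v^{k(v)}$ or of $v\tau v$ straddling the $v$/$\tau$ boundary, so you cannot conclude $w\le_{f}\tau$ and hence cannot reach the contradiction $|w|\le|\tau|\le n-1\le|w|-1$.

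The paper's route is genuinely different and does not pass through the chain $(q_{i})$ at all. It assumes $n\le|w|$ and applies Theorem~\ref{Theo: form of synch word} with a word $v$ built from the \emph{first and last} letters of $w$ (namely $v=a^{|w|}b^{|w|}$ with $a=w[1]$, $b=w[|w|]$ when $a\ne b$, with small adjustments when $w=a^{r}b^{h}$ or $a=b$), chosen precisely so that $w$ cannot be a factor of $v$ nor of $v^{k(v)}$, and so that in $v^{k(v)}\tau v^{k(v)}$ the occurrence of $w$ is forced to lie entirely inside $\tau$; then $|\tau|\le n-1\le|w|-1$ gives the contradiction. The essential idea you are missing is this choice of $v$ driven by the boundary letters of $w$, which makes the combinatorics-on-words step go through. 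Without it, the strict-decrease claim remains an unproven assertion in the case that matters, so the lower bound is not established.
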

\begin{proof}
Since in~\cite[Lemma 1]{SOFSEM} it has been shown that $rc(I)=|w|+1$ for $w=a^{n}$, we may assume $|\Sigma|>1$. By Theorem~\ref{theorem: princ} we have $rc(I)\leq |w|+1$. Suppose, contrary to our claim, that there is a synchronizing automaton ${\mathrsfs A}=\la Q,\Sigma,\delta\ra$ with $|Q|= n\le |w|$ for which $I$ serves as the language of reset words. The equality $|w|=1$ implies that $rc(I)=2$, so in what follows we assume that $|w|>1$. Let $a$ and $b$ be the initial and final letter of $w$ respectively. Denote by $a^{r}$ the maximal prefix of $w$ of the form $a^{l}$, $l\in\mathbb{N}$, and by $b^{h}$ the maximal suffix of $w$ of the form $b^{l}$, $l\in\mathbb{N}$. We consider the following cases.

\textbf{Case 1.} Assume $a\neq b$. Thus $w$ can be factorized as $w=a^{r}ub^{h}$ for some $u\in\Sigma^{*}$. Suppose first that $u\in\Sigma^{+}$. Let us take $v=a^{|w|}b^{|w|}$. By Theorem~\ref{Theo: form of synch word} we have two cases: either $v^{k(v)}\in \Syn(\mathrsfs{A})=I$, or there is a word $\tau$ with $|\tau|\le n-1\le |w|-1$ such that $v^{k(v)}\tau v^{k(v)}\in I$.

Suppose that $v^{k(v)}\in I$. Thus $w\le_{f}v^{k(v)}$, and since $w$ can not be a factor of either $a^{|w|}$ or $b^{|w|}$, it must be a factor of $v$. Since $u\neq\varepsilon$ we have that $u[1]\neq a$ and $u[|u|]\neq b$ by the definition of $a^{r},b^{h}$. Thus $w$ is not a factor of $v$, a contradiction. Therefore, we can assume that $v^{k(v)}\tau v^{k(v)}\in I$, and so $w\le_{f}v^{k(v)}\tau v^{k(v)}$. From the arguments above we have that $w$ can not be a factor of $v$ or $v^{k(v)}$, so we have $w\le_f v\tau v$. Since $w$ is not a factor of $v$, $w[1]=a\neq b$, $w[|w|]=b\neq a$, we obtain $w\le_{f}\tau$. Hence $|w|\le |\tau|\le |w|-1$, which is a contradiction.

Hence we may consider $u=\varepsilon$, and so $w=a^{r}b^{h}$. In~\cite[Lemma 1]{SOFSEM} it was shown that $rc(I)=|w|+1$ for $w\in\{a^{n},b^{n}\}$. In the same paper it was obtained that $rc(I)=|w|+1$ for $w=a^{n-1}b$, thus we can assume that $r\ge 1$ and $h\ge 2$. If $r>1$ we take $v=a^{r-1}b^{h-1}$. By Theorem \ref{Theo: form of synch word} we have that either $v^{k(v)}\in I$, or $v^{k(v)}\tau v^{k(v)}\in I$ for some word $\tau$ with $|\tau|\le n-1\le |w|-1$. Obviously, $w=a^{r}b^{h}$ can not be a factor of $v^{k(v)}$. Therefore, $w$ is a factor of $v\tau v$. Again using simple technique from combinatorics on words it is easy to see that $w$ must be a factor of $\tau$. Hence we get $|w|\le |\tau|\le |w|-1$, a contradiction. If $r=1$ we take $v=ab^{h-1}$. By Theorem \ref{Theo: form of synch word} we have that either $v^{k(v)}\in I$, or $v^{k(v)}\tau v^{k(v)}\in I$ for some word $\tau$ with $|\tau|\le n-1\le |w|-1$. The word $w=ab^{h}$ is not a factor of $v^{k(v)}$, thus $w\le_{f} v\tau v$. Note that $h>2$, hence $w$ must be a factor of $\tau$, which is again a contradiction.

\textbf{Case 2.} Assume $a=b$. If $w\in\{a^{n},b^{n}\}$ then $rc(I)=|w|+1$~\cite[Lemma 1]{SOFSEM}. Therefore, we can assume that $w=a^{r}ua^{h}$ for some $u\in\Sigma^{+}$ with $u[1]\neq a$, $u[|u|]\neq a$. In this case we apply Theorem~\ref{Theo: form of synch word} with $v=b$ for some $b\in\Sigma\setminus\{a\}$. Providing the same arguments as above, it is easy to prove that $w$ has to be a factor of a word $\tau$ with $|\tau|\le |w|-1$, which again leads to the contradiction $|w|\le |\tau|\le |w|-1$.\qed
\end{proof}

Note that by Theorem~\ref{theorem: princ} we have the equality $rc(I)=rdc(I)=|w|+1$.

\section{A lifting property for strongly connected synchronizing automata}\label{sec: lifting prop}
The aim of this section is to prove that strongly connected synchronizing automata are all and only all the homomorphic images of automata from some particular class.
\begin{definition}\label{defn: trim aut}
The considered class $\mathcal{L}(\Sigma)$ is formed by all the trim automata $\mathrsfs{A}=\langle Q,\Sigma,\delta,q_0, \{q_{0}\}\rangle$ such that $L[\mathrsfs{A}]=w^{-1}\Sigma^{*}w$ for some word $w\in\Sigma^{*}$.
\end{definition}
Here we reduce Cern\'{y}'s conjecture to the same conjecture for the quotients of automata from the class $\mathcal{L}(\Sigma)$. We have the following proposition.
\begin{proposition}\label{prop: class property}
Let $\mathrsfs{A}\in \mathcal{L}(\Sigma)$ with $L[\mathrsfs{A}]=w^{-1}\Sigma^{*}w$. Then $\mathrsfs{A}$ is a strongly connected synchronizing automaton and $w$ is a reset word for $\mathrsfs{A}$.
\end{proposition}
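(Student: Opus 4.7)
The plan is to unpack the definition of $\mathcal{L}(\Sigma)$ and exploit the fact that the final-state set is the singleton $\{q_0\}$. Since $q_0$ is the only accepting state, membership of a word $v$ in $L[\mathrsfs{A}]$ is equivalent to $\delta(q_0,v)=q_0$. So the whole argument reduces to showing that, given any state $q$, we can find a word that sends $q_0$ to $q_0$ and passes through $q$ on the way; then the suffix of that word after $q$ will serve both as a reset witness and as a strong-connectivity witness.

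First I would observe the inclusion $\Sigma^{*}w\subseteq w^{-1}\Sigma^{*}w$: indeed for any $u\in\Sigma^{*}$ one has $w(uw)=(wu)w\in\Sigma^{*}w$, so $uw\in w^{-1}\Sigma^{*}w=L[\mathrsfs{A}]$. This is the only combinatorial fact needed.

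The first main step is to prove that $w$ is a reset word taking every state into $q_0$. Let $q\in Q$. Since $\mathrsfs{A}$ is trim, there exists $u_q\in\Sigma^{*}$ with $\delta(q_0,u_q)=q$. Then $u_qw\in\Sigma^{*}w\subseteq L[\mathrsfs{A}]$, and because $\{q_0\}$ is the set of final states this forces $\delta(q_0,u_qw)=q_0$, i.e.\ $\delta(q,w)=q_0$. Hence $w\in\Syn(\mathrsfs{A})$ and $\mathrsfs{A}$ is synchronizing, with $q_0$ the sink selected by $w$.

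The second step, strong connectedness, is then immediate: for arbitrary $p,q\in Q$, apply $w$ to reach $q_0$ from $p$, and then apply a word $u_q$ (given by trimness) to reach $q$ from $q_0$; the concatenation $wu_q$ maps $p$ to $q$. There is no real obstacle here; the only point that requires a little care is making sure one uses the hypothesis that $\{q_0\}$ is \emph{both} the initial state and the unique final state, since that is what identifies acceptance with returning to $q_0$ and thereby converts the language equality $L[\mathrsfs{A}]=w^{-1}\Sigma^{*}w$ into the dynamical statement $\delta(q,w)=q_0$.
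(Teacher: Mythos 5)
Your proof is correct and follows essentially the same route as the paper: use trimness to reach any state $q$ from $q_0$ by some $u$, observe $uw\in w^{-1}\Sigma^*w=L[\mathrsfs{A}]$ so $\delta(q,w)=q_0$ (hence $w$ is reset), and then concatenate $w$ with a reaching word to obtain strong connectivity. Spelling out the inclusion $\Sigma^*w\subseteq w^{-1}\Sigma^*w$ is a nice touch, but it is the same underlying fact the paper uses.
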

\begin{proof}
Since $\mathrsfs{A}=\la Q,\Sigma,\delta,q_0, \{q_{0}\}\ra$ is a trim DFA, for each $q\in Q$ there is a word $u\in\Sigma^{*}$ such that $q_{0}\dt u=q$. On the other hand, $uw\in w^{-1}\Sigma^{*}w=L[\mathrsfs{A}]$, thus we have $q_{0}=q_{0}\dt uw=q\dt w$. In this way, we obtain that $q\dt w=q_{0}$ for each $q\in Q$, i.e. $w\in\Syn(\mathscr{A})$.

Now we prove that $\mathrsfs{A}$ is a strongly connected DFA. Take two arbitrary states $q_{1}, q_{2}\in Q$. Since $\mathrsfs{A}$ is a trim DFA there is a word $u$ such that $q_{0}\dt u=q_{2}$. Thus, since $q_{1}\dt w=q_{0}$, we have $q_{1}\dt(wu)=q_{0}\dt u=q_{2}$.\qed
\end{proof}
Let $w,u\in\Sigma^{*}$, we denote by  $u\wedge_{s} w$ the maximal suffix of the word $u$ that appears in $w$ as a prefix. We have the following lemma (for the proof see appendix).

\begin{lemma}\label{lem: charact wedge}
For any $u,v,w\in\Sigma^{*}$, $(uv)\wedge_{s} w=((u\wedge_{s} w)v)\wedge_{s}w$. Furthermore, for any $v$ with $|v|\ge w$, $(uv)\wedge_{s} w=v\wedge_{s}w$.
\end{lemma}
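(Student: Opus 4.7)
The plan is to argue both equalities by comparing lengths of suffixes directly, using the characterization that $x \wedge_s w$ is the longest suffix of $x$ that is also a prefix of $w$. Set $s = u \wedge_s w$, $\alpha = (uv) \wedge_s w$ and $\beta = (sv) \wedge_s w$; the first identity amounts to $\alpha = \beta$, which I will prove by two opposite inequalities on lengths.

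For the inequality $|\beta| \le |\alpha|$: since $s$ is a suffix of $u$, the word $sv$ is a suffix of $uv$, so any suffix of $sv$ is in particular a suffix of $uv$. In particular $\beta$ is a suffix of $uv$ and a prefix of $w$, hence $|\beta| \le |\alpha|$ by maximality of $\alpha$. The reverse inequality $|\alpha| \le |\beta|$ splits into two cases according to whether $\alpha$ reaches into the $u$-part of $uv$ or not. If $|\alpha| \le |v|$, then $\alpha$ is a suffix of $v$, hence of $sv$, and also a prefix of $w$, so $|\alpha| \le |\beta|$. If $|\alpha| > |v|$, write $\alpha = xv$ where $x$ is a non-empty suffix of $u$; since $\alpha$ is a prefix of $w$, its own prefix $x$ is also a prefix of $w$, so $x$ is a suffix of $u$ and a prefix of $w$, giving $|x| \le |s|$ by maximality of $s$. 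As $x$ and $s$ are both suffixes of $u$ and $|x| \le |s|$, $x$ is a suffix of $s$, so $\alpha = xv$ is a suffix of $sv$ and a prefix of $w$, and again $|\alpha| \le |\beta|$.

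For the second statement, assume $|v| \ge |w|$. The word $\alpha = (uv)\wedge_s w$ is a prefix of $w$, so $|\alpha| \le |w| \le |v|$; being a suffix of $uv$ of length at most $|v|$, it is a suffix of $v$, and hence $|\alpha| \le |v \wedge_s w|$. Conversely $v \wedge_s w$ is a suffix of $v$, hence of $uv$, and a prefix of $w$, so $|v \wedge_s w| \le |\alpha|$, and the two are equal.

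The only real subtlety is the case $|\alpha|>|v|$ in the first part, where one must recognize that the ``overhang'' $x$ into $u$ is itself forced to be a prefix of $w$, which then lets one replace $u$ by its maximal relevant suffix $s$ without losing anything; everything else is a routine length comparison.
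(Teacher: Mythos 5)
Your proof is correct and follows essentially the same route as the paper: the same case split on whether $(uv)\wedge_s w$ is contained in $v$ or overhangs into $u$, and the same key observation that the overhang $x$ is simultaneously a suffix of $u$ and a prefix of $w$, hence a suffix of $u\wedge_s w$. Packaging it as two opposite length inequalities is a minor presentational tidying of what the paper does in one pass.
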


Let $\mathscr{A}=\la Q,\Sigma,\delta,q_0,F\ra$ be a DFA. For a state $q\in Q$ we define the \emph{right} language of $q$  $L_q[\mathscr{A}]=\{u\in\Sigma^*\mid q\dt u\in F\}$.
For $p,q\in Q$ we say that $p$ and $q$ are \emph{equivalent} if $L_q[\mathscr{A}]=L_p[\mathscr{A}]$. A DFA with a distinguished initial state and distinguished set of final states is minimal if it contains no (different) equivalent states and all states are reachable from the initial state.
The automata from $\mathcal{L}(\Sigma)$ recognize languages which are left quotients of the form $w^{-1}\Sigma^{*}w$. In fact these languages are recognized by automata with exactly $|w|+1$ states as it is shown in the following proposition.
\begin{proposition}\label{prop: language charact. A_w}
  Consider the automaton $\mathrsfs{A}_w=\la P(w),\Sigma,\xi,q_n, \{q_n\}\ra$ where $P(w)=\{q_{0},\dots, q_{n}\}$ is the set of prefixes of the word $w$ of length $0\le i\le |w|=n$, and the transition function is defined by the rule $\xi(q_{i},a)=(q_{i}a)\wedge_{s} w$ for all $a\in\Sigma$, $q_i\in P(w)$. The DFA $\mathrsfs{A}_w$ is the minimal automaton recognizing the language
  \begin{equation}\label{eq: language A_w}
    L[\mathrsfs{A}_w]=w^{-1}\Sigma^*w
  \end{equation}
\end{proposition}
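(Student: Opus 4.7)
My plan is to split the proof into two independent parts: first, I would verify the language identity~\eqref{eq: language A_w}, and second, I would show that $\mathrsfs{A}_w$ is minimal by proving that all $n+1$ states are reachable from the initial state $q_n$ and that no two states are equivalent.

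For the language identity, the natural move is to establish by induction on $|u|$ the formula $\xi(q_i,u)=(q_iu)\wedge_s w$ for every $q_i\in P(w)$ and every $u\in\Sigma^*$. The base case is immediate, since any prefix $q_i$ of $w$ satisfies $q_i\wedge_s w=q_i$. For the induction step with $u=u'a$, I would combine the definition $\xi(s,a)=(sa)\wedge_s w$ with the first identity of Lemma~\ref{lem: charact wedge} to rewrite $\xi(\xi(q_i,u'),a)=(((q_iu')\wedge_s w)a)\wedge_s w$ as $(q_iu'a)\wedge_s w$. Specialising to $q_i=q_n=w$, acceptance of a word $u$ is equivalent to $(wu)\wedge_s w=w$, which amounts to $w$ being a suffix of $wu$ (no longer word can be a prefix of $w$), and this in turn is exactly the condition $u\in w^{-1}\Sigma^*w$.

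For minimality, I would treat reachability and pairwise distinguishability separately. The key observation for reachability is that once $q_0$ is reached, every $q_i$ is obtained by reading $w[1..i]$, since $\xi(q_0,w[1..i])=w[1..i]\wedge_s w=q_i$. To reach $q_0$ from $q_n$, I would pick a letter $a\in\Sigma\setminus\{w[1]\}$ and an integer $N\ge n$, and check that no non-empty suffix of $wa^N$ is a prefix of $w$: suffixes of length at most $N$ are non-empty powers of $a$, hence not prefixes of $w$ because $a\ne w[1]$, while suffixes of length greater than $N$ already exceed $n=|w|$. Therefore $\xi(q_n,a^N)=\varepsilon=q_0$. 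For distinguishability, for $i<j$ I would take $v=w[j+1..n]$; then $\xi(q_j,v)=(w[1..j]w[j+1..n])\wedge_s w=w\wedge_s w=q_n$ is accepting, whereas $\xi(q_i,v)\in P(w)$ has length at most $i+(n-j)<n$ and so cannot be $q_n$.

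The main conceptual step is the identification of the transition $\xi$ with the $\wedge_s$-operation via Lemma~\ref{lem: charact wedge}; after that both halves of minimality reduce to elementary length comparisons. The only slightly delicate point is that the reachability of $q_0$ needs a letter distinct from $w[1]$, so the statement is implicitly understood under the standing assumption $|\Sigma|\ge 2$ (or the trivial case $w=\varepsilon$): in the unary case one has $w^{-1}\Sigma^*w=\Sigma^*$ and the minimal recogniser collapses to a single state, which is the only scenario the argument does not cover.
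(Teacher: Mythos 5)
Your proof follows essentially the same route as the paper's: establish $\xi(q_i,u)=(q_iu)\wedge_s w$ via Lemma~\ref{lem: charact wedge}, specialise to $q_i=q_n$ for the language identity, reach $q_0$ by reading a power of a letter $a\neq w[1]$ and then $q_i$ by reading $w[1..i]$, and separate $q_i$ from $q_j$ ($i<j$) via the suffix $w[j+1..n]$ using a length comparison. Your closing remark about the standing assumption $|\Sigma|\ge 2$ (and the degenerate case $w=\varepsilon$) is a genuine refinement: the paper's reachability step silently invokes the existence of a letter distinct from $w[1]$ and does not flag the unary case, where the proposition as stated actually fails since $w^{-1}\Sigma^*w=\Sigma^*$ collapses to a one-state automaton.
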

\begin{proof}
By Lemma~\ref{lem: charact wedge} it is straightforward to see that $\xi(q_{i},u)=(q_{i}u)\wedge_{s}w$ for all $u\in\Sigma^*$, $q\in Q$. First we prove the equality (\ref{eq: language A_w}). Let $u\in\Sigma^*$ and $\xi(q_n,u)= q_n$. Hence $w=q_n=(wu)\wedge_{s}w$, i.e. $wu\in\Sigma^{*}w$. Conversely, if $u\in w^{-1}\Sigma^*w$, that is $wu\in\Sigma^*w$, then $(wu)\wedge_{s}w=w=q_{n}$. This implies that $\xi(q_n,u)=q_n$, i.e. $u\in L[\mathrsfs{A}_w]$.

We now consider the minimality issue. We verify that each state $q_i\in P(w)$ is reachable from the initial state $q_n$. Indeed, let $a$ be any letter from $\Sigma$ different from $w[1]$. We have the equality $\xi(q_n,a^n)=q_0$. The word $w[1..i]$ maps $q_0$ to $q_i$, so we have $\xi(q_n,a^nw[1..i])=q_i$. Now we take any $q_i, q_j\in P(w)$ with $i\neq j$. Without loss of generality we can assume $i<j$. Consider the word $u=w[j+1,n]$. We have $\xi(q_j,u)=q_n$ while $\xi(q_i,u)\neq q_n$ since $|q_iu|<|w|$. Hence $q_i,q_j$ are not equivalent.
So the DFA $\mathrsfs{A}_w$ is minimal.\qed
\end{proof}

\begin{example}
Take $w=aba$, $\Sigma=\{a,b\}$. The minimal automaton $A_w$ recognizing the language $L=w^{-1}\Sigma^*w$ is shown in Fig.~\ref{fig: aba}.
\end{example}

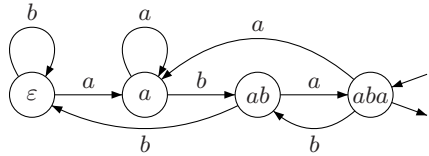
\begin{figure}[ht]
\unitlength=1.5mm
\begin{center}
\begin{picture}(30,9)
\gasset{Nh=4,Nw=4,Nmr=2,loopdiam=4}
\node(A)(0,5){$\varepsilon$}
\node(B)(10,5){$a$}
\node(C)(20,5){$ab$}
\node(D)(30,5){$aba$}
\imark[iangle=20](D)
\fmark[fangle=-20](D)
\drawedge(A,B){$a$}
\drawedge(B,C){$b$}
\drawedge(C,D){$a$}
\drawedge[curvedepth=3](C,A){$b$}
\drawedge[curvedepth=-5,ELside=r](D,B){$a$}
\drawedge[curvedepth=3](D,C){$b$}
\drawloop[loopangle=90](A){$b$}
\drawloop[loopangle=90](B){$a$}
\end{picture}
\caption{Automaton $\mathscr{A}_w$ for $w=aba$}
\label{fig: aba}
\end{center}
\end{figure}

Let $\mathrsfs{A}=\la Q,\Sigma,\delta\ra$ be a strongly connected synchronizing automaton.
By Theorem~\ref{theo: characterization 2} we can build for $\mathrsfs{A}$ the associated reset left regular decomposition $\mathcal{I}(\mathrsfs{A})=\{I_{i}\}_{i\in Q}$ where $\uplus_{i\in Q}I_{i}=I=\Syn(\mathrsfs{A})$. Take a word $w\in I$ of minimum length. Let $\sigma_{w}$ be a binary relation on $I$ defined as follows. For $u,v\in I$ we say that
\begin{equation}\label{eq: sigma w}
(u,v)\in\sigma_w \mbox{ if and only if } u,v\in I_{i} \mbox{ for some } i\in Q \mbox{ and } u\wedge_{s}w=v\wedge_{s}w
\end{equation}
We have the following lemma.
\begin{lemma}\label{lem: prop sigma_w}
Let $\mathrsfs{A}=\la Q,\Sigma,\delta\ra$ be a strongly connected DFA, and $\{I_{i}\}_{i\in Q}$ its associated reset left regular decomposition.
The relation $\sigma_w$ is a right congruence on $I$. Furthermore, each $\sigma_w$-class is a left ideal contained in $I_{i}$ for some $i\in Q$.
\end{lemma}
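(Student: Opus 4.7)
The plan is to verify first the equivalence relation properties, then the right congruence property, then the inclusion into some $I_i$ (which is built into the definition), and finally closure under left multiplication. Throughout I will use Theorem~\ref{theo: characterization 2} in the form: $u\in I_i$ iff $\delta(Q,u)=\{i\}$.

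\emph{Step 1 (equivalence).} Reflexivity, symmetry and transitivity of $\sigma_w$ follow at once from the fact that both defining conditions (being in the same block $I_i$ and agreeing under $\wedge_s w$) are themselves equivalence conditions.

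\emph{Step 2 (right congruence).} Suppose $(u,v)\in \sigma_w$ and take any $a\in\Sigma$; since $I$ is a two-sided ideal, $ua,va\in I$. Because $u,v\in I_i$ for some $i\in Q$, we have $\delta(Q,u)=\delta(Q,v)=\{i\}$, hence $\delta(Q,ua)=\delta(Q,va)=\{\delta(i,a)\}$, so $ua$ and $va$ lie in the same block $I_{\delta(i,a)}$. On the other hand, Lemma~\ref{lem: charact wedge} gives
$(ua)\wedge_{s}w=((u\wedge_{s}w)a)\wedge_{s}w=((v\wedge_{s}w)a)\wedge_{s}w=(va)\wedge_{s}w.$
So $(ua,va)\in\sigma_w$, and iterating on the letters of any $x\in\Sigma^*$ yields $(ux,vx)\in\sigma_w$ whenever both $ux,vx\in I$.

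\emph{Step 3 (each class sits inside some $I_i$).} Immediate from the definition of $\sigma_w$: any two $\sigma_w$-equivalent elements are required to lie in the same $I_i$, so if $[u]_{\sigma_w}$ is a class and $u\in I_i$, then $[u]_{\sigma_w}\subseteq I_i$.

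\emph{Step 4 (each class is a left ideal).} This is the step where the minimality of $|w|$ in $I$ is used, and it is the only really substantive point. Fix a class $C=[u]_{\sigma_w}\subseteq I_i$, take $v\in C$ and $x\in\Sigma^*$. I must check $xv\in C$. First, $xv\in I$ since $I$ is two-sided. Next, $\delta(Q,xv)=\delta(\delta(Q,x),v)\subseteq\delta(Q,v)=\{i\}$, so $xv\in I_i$. Finally, since $w$ has minimal length in $I$ and $v\in I$ we have $|v|\ge |w|$; the second statement of Lemma~\ref{lem: charact wedge} then yields $(xv)\wedge_{s}w=v\wedge_{s}w=u\wedge_{s}w$. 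Hence $(xv,u)\in\sigma_w$, i.e. $xv\in C$. This proves $\Sigma^*C\subseteq C$, so $C$ is a left ideal.

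The only non-routine ingredient is the use of the minimality of $|w|$ to guarantee $|v|\ge |w|$ for every $v\in I$, which is exactly what licenses the second clause of Lemma~\ref{lem: charact wedge} and thereby makes the class absorb arbitrary left factors. Everything else is a direct unwinding of the definitions combined with the two parts of Lemma~\ref{lem: charact wedge}.
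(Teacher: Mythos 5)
Your proof is correct and follows essentially the same route as the paper's: the right-congruence property comes from the first clause of Lemma~\ref{lem: charact wedge} together with the closure of the decomposition under right multiplication by letters, and the left-ideal property of each class comes from the second clause of Lemma~\ref{lem: charact wedge} once the minimality of $|w|$ guarantees $|v|\ge|w|$ for all $v\in I$. The only cosmetic difference is that you invoke the explicit description $I_q=\{u:\delta(Q,u)=q\}$ from Theorem~\ref{theo: characterization 2} where the paper appeals directly to property~(i) of Definition~\ref{defn: regular dec}; these are equivalent, and you spell out a couple of containments the paper leaves implicit.
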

\begin{proof}See appendix.\end{proof}

Note that $\mathrsfs{A}_w\in\mathcal{L}(\Sigma)$.
Now we are in position to state the main result of this section.
\begin{theorem}\label{theo: lifting property}
Let $\mathrsfs{A}=\la Q,\Sigma,\delta\ra$ be a strongly connected synchronizing automaton. For any reset word $w$ of minimum length, there is a DFA $\mathrsfs{B}\in\mathcal{L}(\Sigma)$ with $L[\mathrsfs{B}]=w^{-1}\Sigma^*w$ and
$$
\Sigma^{*}w\Sigma^{*}\subseteq \Syn(\mathrsfs{B})\subseteq \Syn(\mathrsfs{A})
$$
such that $\mathrsfs{A}$ is a homomorphic image of $\mathrsfs{B}$.
\end{theorem}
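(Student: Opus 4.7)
The strategy is to refine the reset left regular decomposition $\{I_q\}_{q\in Q}=\mathcal{I}(\mathrsfs{A})$ supplied by Theorem~\ref{theo: characterization 2} using the right congruence $\sigma_w$. By Lemma~\ref{lem: prop sigma_w} the family $\{C_\alpha\}_{\alpha\in A}$ of $\sigma_w$-classes consists of disjoint left ideals, each sitting inside a unique $I_q$, whose union is $I=\Syn(\mathrsfs{A})$. The candidate automaton is $\mathrsfs{B}:=\mathcal{A}(\{C_\alpha\}_{\alpha\in A})$, produced by the bijection of Theorem~\ref{theo: characterization 2}; for this to make sense one must first check that $\{C_\alpha\}_{\alpha\in A}$ is itself a reset left regular decomposition in the sense of Definition~\ref{defn: regular dec}. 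Condition (i) is immediate since $\sigma_w$ is a right congruence on $I$ and $Ia\subseteq I$; condition (ii) transfers from $\{I_q\}$, because $Iu\subseteq C_\beta\subseteq I_q$ forces $u\in I$ via condition (ii) for $\{I_q\}$.

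Given $\mathrsfs{B}$, I would exhibit $\mathrsfs{A}$ as its homomorphic image via $\varphi:C_\alpha\mapsto I_q$ (whenever $C_\alpha\subseteq I_q$). Well-definedness is a consequence of Lemma~\ref{lem: prop sigma_w}, and action preservation follows from the observation that $C_\alpha\dt a$ lies in both a single $C_\beta$ and a single $I_{q'}$, forcing $I_qa\subseteq I_{q'}$. Surjectivity follows from the strong connectedness of $\mathrsfs{A}$: every $I_q$ is nonempty and therefore is a union of $\sigma_w$-classes. By Theorem~\ref{theo: characterization 2}, $\mathrsfs{B}$ is strongly connected synchronizing with $\Syn(\mathrsfs{B})=\biguplus_{\alpha}C_\alpha=I=\Syn(\mathrsfs{A})$, which already gives the inclusion chain $\Sigma^{*}w\Sigma^{*}\subseteq\Syn(\mathrsfs{B})\subseteq\Syn(\mathrsfs{A})$.

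To place $\mathrsfs{B}$ in $\mathcal{L}(\Sigma)$ I would single out as the initial/final state the $\sigma_w$-class $C_w$ containing $w$. The crucial point is to identify $C_w$ concretely with $\Sigma^{*}w$. Because $w$ has minimum length in $I$, every element of $I$ has length at least $|w|$, so the second clause of Lemma~\ref{lem: charact wedge} applies and makes the operation $u\mapsto u\wedge_s w$ insensitive to left prefixing on such words. Together with $Q\dt xw=Q\dt w=q^{*}$ for every $x\in\Sigma^{*}$, this yields $\Sigma^{*}w\subseteq C_w$; the reverse inclusion is immediate from $u\wedge_s w=w\iff w\le_s u$. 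From $C_w=\Sigma^{*}w$ one unpacks the transition rule of Theorem~\ref{theo: characterization 2} to compute
\[
L[\mathrsfs{B}]=\{u\in\Sigma^{*}:\Sigma^{*}wu\subseteq\Sigma^{*}w\}=w^{-1}\Sigma^{*}w,
\]
and trimness of $\mathrsfs{B}$ comes for free from its strong connectedness, so $\mathrsfs{B}\in\mathcal{L}(\Sigma)$.

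The main obstacle I expect is the identification $C_w=\Sigma^{*}w$, which is exactly the place where the minimality of $|w|$ is used through Lemma~\ref{lem: charact wedge}: the congruence $\sigma_w$ has been set up so that its classes are left ideals, but aligning the abstract class of $w$ with the concrete left ideal $\Sigma^{*}w$ relies on the stabilisation of $\cdot\wedge_s w$ on words of length at least $|w|$. Once this identification is secured, the remaining steps are routine bookkeeping between the two reset left regular decompositions provided by Theorem~\ref{theo: characterization 2}.
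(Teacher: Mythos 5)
Your proposal is correct and follows essentially the same plan as the paper: refine the decomposition $\{I_q\}_{q\in Q}=\mathcal{I}(\mathrsfs{A})$ by the right congruence $\sigma_w$, build $\mathrsfs{B}$ on the refinement, identify the class of $w$ with $\Sigma^{*}w$ using the length-minimality of $w$ and Lemma~\ref{lem: charact wedge}, and then extract the homomorphism $\varphi$ by mapping each $\sigma_w$-class to the $I_q$ containing it. The one genuinely different move is that you verify up front that the whole family of $\sigma_w$-classes is itself a reset left regular decomposition and then invoke the bijection $\mathcal{A}$ of Theorem~\ref{theo: characterization 2}. This buys you two things cheaply: strong connectedness (hence trimness and reachability of every class from $C_w$) comes for free, and you get $\Syn(\mathrsfs{B})=\uplus_\alpha C_\alpha=I=\Syn(\mathrsfs{A})$ outright, a slightly stronger conclusion than the inclusion $\Syn(\mathrsfs{B})\subseteq\Syn(\mathrsfs{A})$ the theorem asks for. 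The paper instead constructs $\mathrsfs{B}$ explicitly on the reachable part $H$ of the index set and proves the inclusion $\Syn(\mathrsfs{B})\subseteq\Syn(\mathrsfs{A})$ directly via an auxiliary claim (each $I_j$ contains some reachable $\sigma_w$-class), not committing to $H$ being the full index set; the two routes coincide once one accepts Theorem~\ref{theo: characterization 2}'s guarantee that $\mathcal{A}$ lands in $\textbf{SCSA}_\Sigma$. All the pieces you rely on (the right-congruence and left-ideal properties of $\sigma_w$-classes, the equivalence $u\wedge_s w=w\iff u\in\Sigma^{*}w$, the disjointness argument giving well-definedness and action-preservation of $\varphi$) check out.
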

\begin{proof}See appendix.\end{proof}

Note that the previous theorem is constructive and we can effectively compute the lifted automaton $\mathrsfs{B}$ of the statement. Moreover, the minimality of the length of the word $w$ among the reset words is also necessary to ensure the fact that each equivalence class is a left ideal. We have the following corollary.
\begin{corollary}
The class of strongly connected synchronizing automata are all and only all the homomorphic images of the class  $\mathcal{L}(\Sigma)$ formed by the trim automata $\mathrsfs{A}=\la Q,\Sigma,\delta,\{q_0\}, q_{0}\ra$ such that $L[\mathrsfs{A}]=w^{-1}\Sigma^{*}w$ for some word $w\in\Sigma^{*}$.
\end{corollary}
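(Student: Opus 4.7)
The plan is to establish the two inclusions separately, with the nontrivial direction already handled by Theorem~\ref{theo: lifting property}.

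For the inclusion ``homomorphic image of some $\mathrsfs{B}\in\mathcal{L}(\Sigma)$ $\Rightarrow$ strongly connected synchronizing'', first I would invoke Proposition~\ref{prop: class property} to observe that every $\mathrsfs{B}\in\mathcal{L}(\Sigma)$ is itself strongly connected and synchronizing (in particular, if $L[\mathrsfs{B}]=w^{-1}\Sigma^{*}w$, then $w\in\Syn(\mathrsfs{B})$). It then remains to show that these two properties are preserved under automaton homomorphisms. For synchronization: if $\varphi:\mathrsfs{B}\rightarrow\mathrsfs{C}$ is a surjective homomorphism of semiautomata and $w\in\Syn(\mathrsfs{B})$, then for any two states $\varphi(p),\varphi(q)\in\varphi(\mathrsfs{B})$ we have $\varphi(p)\dt w=\varphi(p\dt w)=\varphi(q\dt w)=\varphi(q)\dt w$, so $w$ resets $\varphi(\mathrsfs{B})$. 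For strong connectivity: given $\varphi(p),\varphi(q)\in\varphi(\mathrsfs{B})$, pick a word $u$ with $p\dt u=q$ in $\mathrsfs{B}$; then $\varphi(p)\dt u=\varphi(p\dt u)=\varphi(q)$.

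For the other inclusion ``every strongly connected synchronizing automaton is a homomorphic image of some $\mathrsfs{B}\in\mathcal{L}(\Sigma)$'', this is precisely the content of Theorem~\ref{theo: lifting property}: given $\mathrsfs{A}\in\textbf{SCSA}_{\Sigma}$, pick a reset word $w$ of minimum length and apply the theorem to obtain a DFA $\mathrsfs{B}$ of the required form in $\mathcal{L}(\Sigma)$ such that $\mathrsfs{A}$ is a homomorphic image of $\mathrsfs{B}$.

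Putting the two inclusions together yields the corollary. I do not expect any serious obstacle: the only subtlety is the routine verification that both synchronization and strong connectivity are preserved under surjective homomorphisms, which follows immediately from the definition $\varphi(q\dt a)=\varphi(q)\dt a$ extended to arbitrary words by induction on $|w|$.
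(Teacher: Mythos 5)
Your proof matches the paper's argument exactly: one direction from Proposition~\ref{prop: class property} together with the (easy) preservation of synchronization and strong connectivity under homomorphic images, and the other direction directly from Theorem~\ref{theo: lifting property}. The only difference is that you spell out the routine preservation check which the paper leaves implicit.
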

\begin{proof}
By Proposition \ref{prop: class property} we have that any $\mathrsfs{A}\in\mathcal{L}(\Sigma)$ is a strongly connected synchronizing automata, hence any homomorphic image $\varphi(\mathrsfs{A})$ is also a strongly connected synchronizing automaton. On the other hand, by Theorem \ref{theo: lifting property} any strongly connected synchronizing automaton is a homomorphic image of a DFA from $\mathcal{L}(\Sigma)$.\qed
\end{proof}
Using Theorem \ref{theo: lifting property} we can give another reformulation of Cerny's conjecture using the automata from $\mathcal{L}(\Sigma)$.
\begin{theorem}\label{theo: another restatement}
Cerny's conjecture holds if and only if for any $\mathrsfs{B}\in\mathcal{L}(\Sigma)$  and $\rho\in\Cong_{k}(\mathrsfs{B})$ for all $k<\sqrt{\| \Syn(\mathrsfs{B})\|}+1$ we have
$$
\|\Syn(\mathrsfs{B}/\rho)\|<\|\Syn(\mathrsfs{B})\|
$$
\end{theorem}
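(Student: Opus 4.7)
The plan is to combine the lifting property (Theorem~\ref{theo: lifting property}) with the classical reduction of~\cite{Vo_CIAA07} (\v{C}ern\'{y}'s conjecture holds iff it holds for strongly connected automata), using the compact reformulation: for an $n$-state synchronizing DFA $\mathrsfs{A}$, \v{C}ern\'{y}'s conjecture is just $n\ge\sqrt{\|\Syn(\mathrsfs{A})\|}+1$. Once the lifting is invoked, each direction reduces to a one-line bookkeeping.

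For the ``only if'' direction, assume \v{C}ern\'{y}'s conjecture and take any $\mathrsfs{B}\in\mathcal{L}(\Sigma)$ together with any $\rho\in\Cong_{k}(\mathrsfs{B})$ with $k<\sqrt{\|\Syn(\mathrsfs{B})\|}+1$. By Proposition~\ref{prop: class property}, $\mathrsfs{B}$ is strongly connected and synchronizing, and hence so is its homomorphic image $\mathrsfs{B}/\rho$, which has exactly $k$ states. \v{C}ern\'{y}'s bound then yields $\|\Syn(\mathrsfs{B}/\rho)\|\le(k-1)^{2}<\|\Syn(\mathrsfs{B})\|$, as required.

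For the converse direction, the idea is to pick any potential counterexample $\mathrsfs{A}=\la Q,\Sigma,\delta\ra$ to \v{C}ern\'{y}, which by~\cite{Vo_CIAA07} may be assumed strongly connected, and to lift it via Theorem~\ref{theo: lifting property}. Choose a reset word $w$ of minimum length, so that $|w|=\|\Syn(\mathrsfs{A})\|$, and let $\mathrsfs{B}\in\mathcal{L}(\Sigma)$ and $\rho$ of index $k=|Q|$ be such that $\mathrsfs{A}\cong\mathrsfs{B}/\rho$ and
$$\Sigma^{*}w\Sigma^{*}\subseteq\Syn(\mathrsfs{B})\subseteq\Syn(\mathrsfs{A}).$$
The left inclusion forces $\|\Syn(\mathrsfs{B})\|\le|w|$, the right one gives $\|\Syn(\mathrsfs{A})\|\le\|\Syn(\mathrsfs{B})\|$, and together with $|w|=\|\Syn(\mathrsfs{A})\|$ they collapse to the chain of equalities
$$\|\Syn(\mathrsfs{B})\|=|w|=\|\Syn(\mathrsfs{A})\|=\|\Syn(\mathrsfs{B}/\rho)\|.$$
In particular $\|\Syn(\mathrsfs{B}/\rho)\|$ is \emph{not} strictly less than $\|\Syn(\mathrsfs{B})\|$, so the contrapositive of the assumed condition forces $k\ge\sqrt{\|\Syn(\mathrsfs{B})\|}+1=\sqrt{\|\Syn(\mathrsfs{A})\|}+1$, which is exactly \v{C}ern\'{y}'s bound for $\mathrsfs{A}$.

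The only genuinely nontrivial ingredient is Theorem~\ref{theo: lifting property}; everything else amounts to elementary inequalities. The subtle point to keep in mind is that $w$ must be chosen of \emph{minimum} length, because this is precisely what makes the sandwich $\Sigma^{*}w\Sigma^{*}\subseteq\Syn(\mathrsfs{B})\subseteq\Syn(\mathrsfs{A})$ collapse to $\|\Syn(\mathrsfs{B})\|=\|\Syn(\mathrsfs{A})\|$, which is what allows \v{C}ern\'{y}'s inequality to be transported freely between $\mathrsfs{B}$ and its quotient $\mathrsfs{B}/\rho$.
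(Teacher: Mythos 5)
Your proof is correct and follows essentially the same route as the paper's: both directions hinge on Proposition~\ref{prop: class property} (to keep the quotient strongly connected and synchronizing) and on Theorem~\ref{theo: lifting property} (to realize any strongly connected synchronizing automaton as a quotient of some $\mathrsfs{B}\in\mathcal{L}(\Sigma)$ with matching shortest-reset-word length). The only cosmetic difference is in the ``only if'' direction, where you apply the raw \v{C}ern\'{y} bound $\|\Syn(\mathrsfs{B}/\rho)\|\le(k-1)^2$ directly to the $k$-state quotient, while the paper routes the same inequality through the $rdc$ reformulation of Theorem~\ref{theo: Cerny rdc reform}; both yield $k\ge\sqrt{\|\Syn(\mathrsfs{B}/\rho)\|}+1$ and the argument proceeds identically.
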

\begin{proof}
See appendix.
\end{proof}

\section{Some properties of the automaton $A_w$}
In view of the results of the previous section, left quotients of principal left ideals seem to play a fundamental role in the \v{C}ern\'{y} conjecture. In this regard we initiate a study of automata recognizing languages of the form $w^{-1}\Sigma^*w$. In this section we provide a compact formula to calculate the size of the syntactic semigroup of a language $I=w^{-1}\Sigma^*w$, $w\in\Sigma^*$.

For a regular language $L\subseteq \Sigma^{*}$ the \emph{Myhill conguence}~\cite{Myhill} $\thickapprox_L$ of $L$ is defined as follows:$$u\thickapprox_L \text{ if and only if }xuy\in L \Leftrightarrow xvy\in L \text{ for all }x,y\in\Sigma^{*}.$$ This congruence is also known as \emph{the syntactic congruence} of L. The quotient semigroup $\Sigma^{+}/\thickapprox_L$ of the relation $\thickapprox_L$ is called the \emph{syntactic semigroup} of L. The syntactic semigroup of $L$ is known to be isomorphic to the transition semigroup of the minimal DFA recognizing $L$. The \textit{syntactic complexity} $\sigma(L)$ of a regular language $L$ is the cardinality of its syntactic semigroup. The notion of syntactic complexity is studied quite extensively: for a survey of this topic we refer the reader to \cite{HolK3}.
Also the notion of the syntactic semigroup finds interesting application in the theory of synchronizing automata. Indeed, let $I$ be an ideal language, $\mathcal{S}$ the syntactic semigroup of $I$ and $\mathcal{S}(\mathrsfs{B})$ the transition semigroup of a synchronizing DFA $\mathrsfs{B}$ for which $I=\Syn(\mathscr{B})$. In~\cite{PrincIdFI} it has been shown that $\mathcal{S}$ is a homomorphic image of $\mathcal{S}(\mathrsfs{B})$.

Recall that $u\in\Sigma^{+}$ is an \textit{inner factor} of $w$ if there exist words $x,y\in\Sigma^{+}$ such that $w=xuy$. Denote by $\Fact(w)$ the set of different inner factors of $w$, by $\Suff(w)$ the set of proper non-empty suffixes of $w$ which do not appear in $w$ as inner factors, by $\Pref(w)$ the set of proper non-empty prefixes of $w$ which do not appear in $w$ as suffixes or inner factors, by $\Pref_{syn}(w)$ the set of prefixes of $w$ synchronizing $\mathscr{A}_w$. We have the following
\begin{proposition}\label{prop: synt comp quotient}
Let $I=w^{-1}\Sigma^*w$ for some $w\in\Sigma^*$. The syntactic complexity of $I$ is equal to $$\sigma(I)=|w|+1+|\Pref(w)|+|\Fact(w)|+|\Suff(w)|-|\Pref_{syn}(w)|.$$
\end{proposition}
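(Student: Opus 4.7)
The plan is to pass through the standard identification of $\sigma(I)$ with the size of the transition semigroup of the minimal DFA $\mathscr{A}_w$ from Proposition~\ref{prop: language charact. A_w}, and to count its elements. For each $u\in\Sigma^+$ write $\widehat{u}\colon P(w)\to P(w)$ for the induced map $q\mapsto q\dt u=(qu)\wedge_{s}w$; the count naturally splits according to whether $\widehat{u}$ is constant.

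First, the constant transformations contribute exactly $|w|+1$ elements: by Proposition~\ref{prop: class property} the word $w$ itself yields a constant $\widehat{w}$, and strong connectivity of $\mathscr{A}_w$ lets one post-compose with a word transporting that common image to any desired state $q_{i}\in P(w)$, so all $|w|+1$ constants lie in the transition semigroup.

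Next, I classify the non-constants. By the second part of Lemma~\ref{lem: charact wedge}, if $|u|\ge|w|$ then $\widehat{u}$ is already constant, so only $|u|<|w|$ matters. For such a $u$, a strict inequality $(q_{i}u)\wedge_{s}w>u\wedge_{s}w$ requires a suffix of $q_{i}u$ of length larger than $|u|$ to be a prefix of $w$, and a direct positional check shows that this forces $u$ to appear as a factor of $w$ at some position $\ge 2$. Hence $\widehat{u}$ non-constant implies $u$ is a proper non-empty factor of $w$, i.e.\ $u$ belongs to the disjoint union $\Pref(w)\sqcup\Fact(w)\sqcup\Suff(w)$. Conversely, if $u\in\Fact(w)$ with $w=\alpha u\beta$ and $\alpha,\beta\in\Sigma^{+}$, then $\widehat{u}(q_{|\alpha|})=w[1..|\alpha|+|u|]$ is strictly longer than $u\wedge_{s}w$, so $\widehat{u}$ is non-constant; the analogous identity $\widehat{u}(q_{|w|-|u|})=q_{|w|}$ covers $u\in\Suff(w)$. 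For $u\in\Pref(w)$, $\widehat{u}$ is constant exactly when $u\in\Pref_{syn}(w)$, by the very definition of $\Pref_{syn}(w)$.

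The main obstacle, which I expect to require the most care, is injectivity: distinct non-empty proper factors $u\ne v$ of $w$ must produce distinct $\widehat{u}\ne\widehat{v}$. I plan a short case analysis. Fix an occurrence $u=w[p..p+|u|-1]$; then $\widehat{u}(q_{p-1})=q_{p-1+|u|}$, and since the word $w[1..p-1]u$ itself has length $p-1+|u|$, this is the longest prefix of $w$ that can arise as a suffix of a word of that length. When $|u|=|v|$, matching $\widehat{v}(q_{p-1})$ with this maximum forces the whole word $w[1..p-1]v$ to equal $w[1..p-1+|u|]$, hence $v=u$. When $|u|<|v|$, picking any occurrence $v=w[q..q+|v|-1]$ we have $\widehat{v}(q_{q-1})=q_{q-1+|v|}$, whereas $\widehat{u}(q_{q-1})$ is a suffix of $w[1..q-1]u$, a word of length only $q-1+|u|<q-1+|v|$, so $\widehat{u}(q_{q-1})$ cannot equal $q_{q-1+|v|}$ and the two transformations differ at $q_{q-1}$. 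Combining the count of constants with the count of non-constants yields $\sigma(I)=(|w|+1)+|\Pref(w)|+|\Fact(w)|+|\Suff(w)|-|\Pref_{syn}(w)|$.
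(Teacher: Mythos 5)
Your proof is correct and takes a genuinely more compact route than the paper's. Both you and the paper identify $\sigma(I)$ with the size of the transition semigroup of $\mathscr{A}_w$ and count $|w|+1$ constant maps plus the non-constant ones, but the ways the non-constants are handled diverge. The paper proceeds by an extended pairwise case analysis: two inner factors, two suffixes, two prefixes, factor vs.\ suffix, prefix vs.\ factor, prefix vs.\ suffix, followed by a separate argument that no word outside $\Pref(w)\cup\Fact(w)\cup\Suff(w)\cup\Syn(\mathscr{A}_w)$ produces a new transformation. You compress all of this into two observations: (i) $\widehat{u}$ is non-constant precisely when $u$ occurs in $w$ at some position $\ge 2$, which you extract by analyzing when $(q_iu)\wedge_{s}w$ can be strictly longer than $u\wedge_{s}w$ (forcing a suffix of $q_iu$ longer than $|u|$ to be a prefix of $w$); and (ii) a single length comparison at the state $q_{p-1}$ preceding an occurrence of the longer word, using $\widehat{u}(q_{p-1})=q_{p-1+|u|}$ and the fact that this is as long as a suffix of $q_{p-1}u$ can possibly be, settles injectivity in one stroke and absorbs every one of the paper's subcases. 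A nice by-product your route makes visible but does not state: by (i), any $u\in\Pref(w)$ occurs in $w$ only at position $1$, hence $\widehat{u}$ is constant and $u\in\Pref_{syn}(w)$; so $\Pref(w)\subseteq\Pref_{syn}(w)$ always holds and the correction term $|\Pref(w)|-|\Pref_{syn}(w)|$ is identically zero. (Both your proof and the paper's tacitly read $\Pref_{syn}(w)$ as the set of \emph{proper} synchronizing prefixes; the definition as literally written would include $w$ itself and shift the formula by one.)
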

\begin{proof}See appendix.\end{proof}

Note that by Proposition~\ref{prop: synt comp quotient} we get an effective algorithm to calculate the syntactic complexity of the left quotient $w^{-1}I$ by $w$ of a principal left ideal $I=\Sigma^*w$.

By Proposition~\ref{prop: class property} the minimal automaton $\mathscr{A}_w$ recognizing $I=w^{-1}\Sigma^*w$ is strongly connected and $w\in\Syn(\mathscr{A}_w)$. Further we show that $\mathscr{A}_w$ is finitely generated. Recall that a reset word $w$ for a given synchronizing DFA $\mathscr{A}$ is called \emph{minimal} if none of its proper prefixes nor suffixes belong to $\Syn(\mathscr{A})$. Denote by $\Syn_{min}(\mathscr{A}_w)$ the set of all minimal reset words for a given synchronizing DFA $\mathscr{A}_w$.

\begin{proposition}
For each $w\in\Sigma^*$, $\mathscr{A}_w$ is a finitely generated synchronizing automaton.
\end{proposition}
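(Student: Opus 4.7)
The plan is to establish finite generation by showing that $\Syn(\mathscr{A}_w)$ contains every word of length at least $|w|$; from this the set $\Syn_{min}(\mathscr{A}_w)$ of minimal reset words is forced to be finite, and finite generation follows immediately.

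The key step uses the transition formula $\xi(q_i,u)=(q_iu)\wedge_{s} w$, valid for all $q_i\in P(w)$ and $u\in\Sigma^*$ by iterating the defining rule of $\mathscr{A}_w$ together with the first clause of Lemma~\ref{lem: charact wedge} (this identity is already exploited in the proof of Proposition~\ref{prop: language charact. A_w}). The second clause of Lemma~\ref{lem: charact wedge} then yields $(q_iu)\wedge_{s} w=u\wedge_{s} w$ whenever $|u|\ge|w|$, and since the right hand side is independent of $q_i$, every such $u$ is a reset word of $\mathscr{A}_w$.

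Consequently, any $u\in\Syn_{min}(\mathscr{A}_w)$ must satisfy $|u|\le|w|$: otherwise the suffix of $u$ of length $|w|$ would be a proper suffix of $u$ belonging to $\Syn(\mathscr{A}_w)$, contradicting minimality. Hence $\Syn_{min}(\mathscr{A}_w)$ is contained in the finite set $\Sigma^{\le|w|}$. To close the argument, I would invoke the standard fact that $\Syn(\mathscr{A}_w)=\Sigma^{*}\Syn_{min}(\mathscr{A}_w)\Sigma^{*}$: the inclusion $\supseteq$ is immediate since $\Syn(\mathscr{A}_w)$ is an ideal, while for the reverse inclusion any $u\in\Syn(\mathscr{A}_w)$ admits a factor of smallest length still lying in $\Syn(\mathscr{A}_w)$, and that factor is forced to be minimal (a proper prefix or suffix of it would be a strictly shorter reset-word factor of $u$). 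This produces a finite generating set, and since the whole argument reduces to an immediate application of Lemma~\ref{lem: charact wedge}, no real obstacle is anticipated.
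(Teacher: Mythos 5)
Your proof is correct and follows essentially the same route as the paper: both arguments use the second clause of Lemma~\ref{lem: charact wedge} to show that the action of any sufficiently long word is independent of the starting state, forcing every minimal reset word to have length at most $|w|$, and hence $\Syn_{min}(\mathscr{A}_w)$ to be finite. Your version makes the intermediate claim (every word of length at least $|w|$ is reset) and the final step ($\Syn(\mathscr{A}_w)=\Sigma^{*}\Syn_{min}(\mathscr{A}_w)\Sigma^{*}$) slightly more explicit, but the substance is identical.
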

\begin{proof}
In order to obtain the desired result we prove that the set $\Syn_{min}(\mathscr{A}_w)$ is finite. Take an arbitrary $u\in\Syn_{min}(\mathscr{A}_w)$. If $|u|>|w|$ then $u$ is not minimal. Indeed, by the definition of the transition function of $\mathscr{A}_w$ and by Lemma~\ref{lem: charact wedge} we get, for all $q_i\in P(w)$, $q_i\dt u=q_iu\wedge_s w=u\wedge_s w=u[2..|u|]\wedge_s w$ since $|u|-1\ge |w|$. Thus we have $|u|\le|w|$. However, there is just finite amount of words of length at most $|w|$. Hence $\mathscr{A}_w$ is a finitely generated synchronizing automaton.\qed
\end{proof}

\section{Representation of regular languages by synchronizing automata}\label{sec: repr}

In this section $\mathscr{A}_L$ stands for the minimal DFA recognizing a regular language $L$. In some cases $\mathscr{A}_L$ may have a unique \emph{non-accepting} sink state $s$, i.e. $s\not\in F$. It may turn out that $\mathscr{A}_L$ is synchronizing and, therefore, each reset word brings the whole automaton to $s$. If this is not the case one may consider partial synchronization in the following sense. A DFA $\mathscr{A}=\la Q,\Sigma,\delta,q_0,F\ra$ with a non-accepting sink state $s$ is called \emph{partially synchronizing} if there exists a word $w\in\Sigma^*$ such that $Q\dt w=\{s,q\}$ for some state $q\in Q$. Any word with this property is said to be \emph{partial reset} word for the DFA $\mathscr{A}$. And the set of all partial reset words for $\mathscr{A}$ is denoted by $\Syn^{par}(\mathscr{A})$.

Let $L$ be a regular language. If $L$ is an ideal language then $\mathscr{A}_L$ is synchronizing and $\Syn(\mathscr{A}_L)=L$. In Section~\ref{sec: lifting prop} it has been shown that the minimal automaton recognizing the language $w^{-1}\Sigma^*w$ is synchronizing and $w$ is a reset word for this automaton. On the other hand, $w\in w^{-1}\Sigma^*w$. So in this case we have that the minimal automaton recognizing a given language $L$ is synchronizing and some word from $L$ is also a reset word for the automaton. In this regard the following interesting question arises. How to describe all regular languages $L$ for which $\mathscr{A}_L$ is synchronizing and $L\cap\Syn(\mathscr{A}_L)\neq\emptyset$?
In this section we answer this question.

Let $L\subseteq\Sigma^*$ be a regular language. A word $w\in\Sigma^*$ is a \emph{constant} for $L$ if the implication $$u_1wu_2\in L, u_3wu_4\in L\Rightarrow u_1wu_4\in L$$ holds for all $u_1,u_2,u_3,u_4\in\Sigma^*.$
We denote the set of all constants of $L$ by $C(L)$. Note that the set $C(L)$ contains the ideal $Z(L)=\{w\mid \Sigma^*w\Sigma^*\cap L=\emptyset\}$. Constant words of a regular language $L$ satisfy the following property, also stayed in~\cite{Schutz}.
\begin{lemma}\label{lem: const}
Let $L\subseteq\Sigma^*$ be a regular language and let $\mathscr{A}_L$ be the minimal automaton recognizing $L$ with set of states $Q$. If $\mathscr{A}_L$ has a non-accepting sink state $s$ then a word $w\in\Sigma^*$ is a constant for $L$ if and only if $|Q\dt w|\leq 2$. If $\mathscr{A}_L$ does not have a non-accepting sink state $s$ then a word $w\in\Sigma^*$ is a constant for $L$ if and only if $|Q\dt w|=1$.
\end{lemma}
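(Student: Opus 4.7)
The plan is to prove both directions by a case analysis based on whether $\mathscr{A}_L$ has a non-accepting sink, exploiting two standard minimality facts: (i) distinct states of the minimal DFA have distinct right languages, so for any $p\ne q$ in $Q$ there is a word separating them; (ii) at most one state can have empty right language, and such a state, when it exists, is precisely the non-accepting sink $s$; in particular, in Case~1 (non-accepting sink $s$ exists) we always have $s\in Q\dt w$ because $s$ is fixed by every letter, while in Case~2 every state $p\in Q$ has non-empty right language $L_p[\mathscr{A}_L]$.

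For the \emph{if} direction in Case~1, I would assume $|Q\dt w|\le 2$, so $Q\dt w\subseteq\{s,p\}$ for a unique state $p\ne s$ (or $Q\dt w=\{s\}$, in which case no word of the form $u_1 w u_2$ lies in $L$ and the constant implication holds vacuously). If $u_1 w u_2\in L$, then $q_0\dt u_1 w\ne s$ since otherwise $q_0\dt u_1 w u_2=s\notin F$; hence $q_0\dt u_1 w = p$. The same argument yields $q_0\dt u_3 w = p$, so $q_0\dt u_1 w u_4 = p\dt u_4 = q_0\dt u_3 w u_4\in F$, proving $u_1 w u_4\in L$. For the \emph{if} direction in Case~2, $|Q\dt w|=1$ makes $q_0\dt u_1 w=q_0\dt u_3 w$ trivially, and the same chain of equalities applies.

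For the \emph{only if} direction I would argue by contrapositive, assuming $|Q\dt w|\ge 3$ in Case~1 (resp.\ $|Q\dt w|\ge 2$ in Case~2) and constructing witnesses that violate the constant implication. In both cases one can pick two distinct states $p_1,p_2\in Q\dt w$, both with non-empty right languages: in Case~1 because at most one image state can be $s$, and in Case~2 because no state has empty right language. By minimality, $L_{p_1}[\mathscr{A}_L]\neq L_{p_2}[\mathscr{A}_L]$, so, swapping indices if necessary, we may assume $L_{p_2}[\mathscr{A}_L]\not\subseteq L_{p_1}[\mathscr{A}_L]$; choose $u''\in L_{p_2}[\mathscr{A}_L]\setminus L_{p_1}[\mathscr{A}_L]$ and any $u'''\in L_{p_1}[\mathscr{A}_L]$. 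Using the fact that $p_i\in Q\dt w$ and that $\mathscr{A}_L$ is trim, I obtain words $v_i$ with $q_0\dt v_i w = p_i$. Setting $u_1=v_1,\,u_2=u''',\,u_3=v_2,\,u_4=u''$ then gives $u_1 w u_2,\,u_3 w u_4\in L$ but $u_1 w u_4\notin L$, contradicting the constant property.

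The only genuinely delicate step is the use of minimality to guarantee the witnesses $p_1,p_2$ with non-empty right languages in Case~1: one must observe that the non-accepting sink (if present) accounts for \emph{exactly one} element of $Q\dt w$, so $|Q\dt w|\ge 3$ forces two non-sink images. Once this bookkeeping is in place, the construction in the last paragraph finishes the proof in both cases uniformly.
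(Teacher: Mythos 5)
Your proof is correct. The paper does not actually supply a proof for this lemma: it states it as a known fact, attributing it to Sch\"{u}tzenberger's paper, and moves on. So you are not deviating from the paper's argument so much as supplying the proof it leaves to citation.

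Your argument is sound and self-contained. In the sufficiency direction for the sink case you correctly observe that $s\in Q\dt w$ always, split on $Q\dt w=\{s\}$ (vacuous) versus $Q\dt w=\{s,p\}$, and exploit that $q_0\dt u_1w$ cannot be $s$ when $u_1wu_2\in L$. In the necessity direction the contrapositive construction is the right idea; the key bookkeeping you flag --- that $|Q\dt w|\geq 3$ in the sink case (resp.\ $|Q\dt w|\geq 2$ in the sinkless case) forces two distinct states $p_1,p_2\in Q\dt w$ with non-empty right languages --- is exactly what makes the separating word $u''\in L_{p_2}[\mathscr{A}_L]\setminus L_{p_1}[\mathscr{A}_L]$ available, and reachability of states in a minimal automaton gives the $v_i$ with $q_0\dt v_iw=p_i$. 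One minor remark: you implicitly use that in a minimal DFA there is at most one state with empty right language and that such a state is automatically a non-accepting sink; this is standard but worth stating explicitly (two states with empty right language are equivalent, hence merged; a state with empty right language maps to a state with empty right language under every letter, hence is a sink; and it cannot be accepting since $\varepsilon$ is not in its right language).
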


By this lemma it follows that constants of a regular language $L$ are described precisely via reset and partial reset words of the minimal automaton recognizing~$L$.
Let $L\subseteq\Sigma^*$, denote by $\overline{L}$ the \emph{complement} to $L$, that is $\overline{L}=\Sigma^*\setminus L$.
\begin{proposition}\label{prop: nonempty complete}
The automaton $\mathscr{A}_L$ is synchronizing and $L\cap\Syn(\mathscr{A}_L)\neq\emptyset$ if and only if the following properties hold:

(i) $C(L)\neq\emptyset$

(ii) $\overline{L}$ does not contain right ideals.
\end{proposition}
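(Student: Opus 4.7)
The whole argument rests on translating condition (ii) into a structural property of the minimal automaton $\mathscr{A}_L=\langle Q,\Sigma,\delta,q_0,F\rangle$. First I would establish the following easy equivalence:
\emph{$\overline{L}$ contains a right ideal if and only if $\mathscr{A}_L$ has a non-accepting sink state.} The ``if'' direction is immediate: if $s$ is a non-accepting sink, pick any $w$ with $\delta(q_0,w)=s$ (such $w$ exists since $\mathscr{A}_L$ is trim); then $w\Sigma^{*}\subseteq\overline{L}$ is a right ideal. For ``only if'', if $R\subseteq\overline{L}$ is a right ideal and $w\in R$, the state $q=\delta(q_0,w)$ has empty right language $L_q[\mathscr{A}_L]=\emptyset$; hence every state reachable from $q$ also has empty right language, and by minimality of $\mathscr{A}_L$ all such states coincide with one fixed state $s$, which is a non-accepting sink.

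For the implication $(\Leftarrow)$, assume (i) and (ii). By the equivalence above $\mathscr{A}_L$ has no non-accepting sink, so Lemma~\ref{lem: const} tells us that a word is a constant of $L$ iff it is a reset word for $\mathscr{A}_L$. Since $C(L)\neq\emptyset$, there exists a reset word, so $\mathscr{A}_L$ is synchronizing. It remains to produce a reset word lying in $L$. Take any reset word $w$ and let $\{q\}=Q\dt w$. If $q\in F$ then $w\in L$. Otherwise $q$ is not a sink (there is none), so there is $u\in\Sigma^{*}$ with $\delta(q,u)\in F$; then $wu$ is again a reset word (reset words form a two-sided ideal) and $\delta(q_0,wu)=\delta(q,u)\in F$, so $wu\in L\cap\Syn(\mathscr{A}_L)$.

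For the converse $(\Rightarrow)$, assume $\mathscr{A}_L$ is synchronizing and some reset word $w\in L$ exists. Since $\Syn(\mathscr{A}_L)\neq\emptyset$, Lemma~\ref{lem: const} yields $\Syn(\mathscr{A}_L)\subseteq C(L)$, which gives (i). For (ii) I argue by contrapositive: if $\overline{L}$ contained a right ideal, the preliminary equivalence would produce a non-accepting sink $s\in Q$; but any sink is absorbed by every word, so the unique state of $Q\dt v$ for any reset word $v$ must equal $s$. In particular $\delta(q_0,w)=s\notin F$, contradicting $w\in L$. Hence $\overline{L}$ contains no right ideal.

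The only non-routine step is the initial translation of (ii) into the sink-state condition; the rest is a straightforward combination with Lemma~\ref{lem: const} and with the fact that reset words form a two-sided ideal, allowing us to prolong a reset word to reach an accepting state.
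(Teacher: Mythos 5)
Your proof is correct and follows essentially the same route as the paper, relying on Lemma~\ref{lem: const} and the connection between a right ideal sitting inside $\overline{L}$ and the presence of a non-accepting sink state. The only differences are cosmetic: you isolate that connection as a preliminary equivalence and verify both directions of it, and in the backward implication you construct a reset word in $L$ directly by prolonging an arbitrary reset word to reach $F$, whereas the paper reaches the same conclusion by observing that $\Syn(\mathscr{A}_L)\subseteq\overline{L}$ would itself be a right ideal inside $\overline{L}$, contradicting (ii).
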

\begin{proof}
Consider the DFA $\mathscr{A}_L=\la Q,\Sigma,\delta,q_0,F\ra$.
Assume that $\mathscr{A}_L$ is synchronizing and the condition $L\cap\Syn(\mathscr{A}_L)\neq\emptyset$ holds. We take any $w\in\overline{L}\cap\Syn(\mathscr{A}_L)$. By Lemma~\ref{lem: const} we have $w\in C(L)$. Arguing by contradiction assume that $\overline{L}$ contains a right ideal. This means that there is a strongly connected component $H\subseteq Q\setminus F$ without outgoing transitions leading to $F$. Thus, for all $w\in\Syn(\mathscr{A}_L)$, we have $H\dt w \cap F=\emptyset$, hence $L\cap\Syn(\mathscr{A}_L)=\emptyset$, which is a contradiction.

Assume that properties (i) and (ii) hold. By property (ii) $\mathscr{A}_L$ does not have a non-accepting sink state. Thus, by Lemma~\ref{lem: const} each constant of $L$ is a reset word for $\mathscr{A}_L$, and since $C(L)$ is not empty, $\mathscr{A}_L$ is synchronizing. Arguing by contradiction, assume that $L\cap\Syn(\mathscr{A}_L)=\emptyset$, hence $\Syn(\mathscr{A}_L)\subseteq \overline{L}$. However, the language $\Syn(\mathscr{A}_L)$ is a right ideal, a contradiction.\qed
\end{proof}

The following proposition deals with the complementary case.

\begin{proposition}\label{prop: empty complete}
The automaton $\mathscr{A}_L$ is synchronizing and $L\cap\Syn(\mathscr{A}_L)=\emptyset$ if and only if the following properties hold:

(i) $Z(L)\neq\emptyset$

(ii) $\overline{L}$ contains a right ideal.
\end{proposition}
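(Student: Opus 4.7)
The plan is to mirror the argument of Proposition~\ref{prop: nonempty complete}, swapping the role of reset words landing inside $L$ for reset words landing in a non-accepting sink of $\mathscr{A}_L$. The bridge between the two conditions and the structure of the minimal automaton is the following elementary dictionary: $\overline{L}$ contains a (non-empty) right ideal if and only if $\mathscr{A}_L$ possesses a non-accepting sink state. Indeed, if $R\subseteq\overline{L}$ is a right ideal and $w\in R$, then for every $v\in\Sigma^*$ one has $\delta(q_0,wv)\notin F$, so $\delta(q_0,w)$ has empty right language; by minimality of $\mathscr{A}_L$ all states with empty right language coincide with a unique sink $s\notin F$. Conversely, if $s$ is a non-accepting sink, the set of words driving $q_0$ into $s$ is a right ideal contained in $\overline{L}$.

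For the forward implication, assume $\mathscr{A}_L=\la Q,\Sigma,\delta,q_0,F\ra$ is synchronizing with $L\cap\Syn(\mathscr{A}_L)=\emptyset$. Since $\Syn(\mathscr{A}_L)$ is an ideal contained in $\overline{L}$, condition (ii) is immediate. For (i), fix any reset word $w$ and set $s=\delta(q_0,w)$; if some $u$ satisfied $\delta(s,u)\in F$, then $wu$ would lie in $\Syn(\mathscr{A}_L)\cap L$, contradicting the hypothesis. Hence $s$ has empty right language and, by the dictionary above, is the non-accepting sink of $\mathscr{A}_L$. Then every reset word $w'$ must map the entire state set onto $\{s\}$, because $s\dt w'=s$. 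Consequently $\delta(q_0,uw'v)=s\notin F$ for all $u,v\in\Sigma^*$, so $\Sigma^*w'\Sigma^*\cap L=\emptyset$, i.e.\ $w'\in Z(L)$, giving $Z(L)\neq\emptyset$.

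For the backward implication, (ii) together with the dictionary yields a non-accepting sink $s$. Pick $w\in Z(L)$ and any $q\in Q$; since the minimal automaton is trim, choose $u$ with $\delta(q_0,u)=q$. Then $uwv\notin L$ for every $v$, so $\delta(q,w)=\delta(q_0,uw)$ has empty right language, forcing $\delta(q,w)=s$. Thus $w\in\Syn(\mathscr{A}_L)$ and $\mathscr{A}_L$ is synchronizing; moreover every reset word sends every state (including $s$) to $s\notin F$, so no reset word lies in $L$.

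The main obstacle is really the dictionary step: translating the purely combinatorial statement that $\overline{L}$ contains a right ideal into the structural statement that $\mathscr{A}_L$ has a non-accepting sink state, which relies crucially on the minimality of $\mathscr{A}_L$. Once that equivalence is in place, both implications follow by routine arguments closely parallel to those used in the proof of Proposition~\ref{prop: nonempty complete}.
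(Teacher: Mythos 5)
Your proof is correct and follows essentially the same approach as the paper's, resting on the same observation that a right ideal inside $\overline{L}$ corresponds (via minimality) to a non-accepting sink state and that words in $Z(L)$ collapse the automaton onto that sink. The only stylistic difference is that where the paper invokes Proposition~\ref{prop: nonempty complete} by contradiction to get (ii) in the forward direction and to get $L\cap\Syn(\mathscr{A}_L)=\emptyset$ in the backward direction, you give short direct arguments (noting $\Syn(\mathscr{A}_L)$ is itself a right ideal in $\overline{L}$, and that every reset word lands in $s\notin F$), which makes the proof a bit more self-contained but does not change the underlying route.
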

\begin{proof}
Consider the DFA $\mathscr{A}_L=\la Q,\Sigma,\delta,q_0,F\ra$.
Assume that $\mathscr{A}_L$ is synchronizing and the condition $L\cap\Syn(\mathscr{A}_L)=\emptyset$ holds. Arguing by contradiction assume that $\overline{L}$ does not contain a right ideal. By Proposition~\ref{prop: nonempty complete} we get that $L\cap\Syn(\mathscr{A}_L)\neq\emptyset$, which is a contradiction. So property (ii) holds. This property is equivalent to the existence of a strongly connected component $H\subseteq Q\setminus F$. By the minimality of $\mathscr{A}_L$ we obtain $|H|=1$, thus $H$ contains just a non-accepting sink state. Since $\mathscr{A}_L$ is synchronizing, each $w\in\Syn(\mathscr{A}_L)$ brings the whole DFA $\mathscr{A}_L$ to $s$, hence $Z(L)\neq\emptyset$.

Conversely, assume that properties (i) and (ii) hold. Again, by property (ii) there is a non-accepting sink state in $\mathscr{A}_L$. Thus each $w$ from $Z(L)$ is a reset word for $\mathscr{A}_L$. Arguing by contradiction, assume that $L\cap\Syn(\mathscr{A}_L)\neq\emptyset$. Thus by Proposition~\ref{prop: nonempty complete} $\overline{L}$ does not contain right ideals. Contradiction.\qed
\end{proof}

Note that in order to check whether property (ii) in both of the previous propositions is satisfied, it is enough to check whether there is a strongly connected component in $Q\setminus F$. The latter can be implemented in time $O(n\cdot|\Sigma|)$, where $n=|Q|$.
Note that some problems related two constants of languages are considered in~\cite{Berl}. In particular, the problem of deciding whether a given partial 2-letter automaton is partially synchronizing is shown to be $NP$-complete (the action of the transition function on some states of a given automaton may be undefined). The notion of a partial synchronizing word from~\cite{Berl} is defined analogously to the notion of partial reset word here.
Now we formally state the following CONSTANT problem:

-- \emph{Input:} a regular language $L$ over $\Sigma$, presented via its minimal recognizing DFA $\mathscr{A}_L$.

-- \emph{Question:} is it true that $C(L)\neq \emptyset$?

We can suppose that $\mathscr{A}_L$ has a non-accepting sink state $s$, since otherwise the problem is equivalent to testing $\mathscr{A}_L$ for synchronization in usual sense.
First we prove the following
\begin{lemma}\label{lem: const characterization}
Let $\mathscr{A}_L=\la Q,\Sigma,\delta, q_0,F\ra$ have a non-accepting sink state $s$. The set $C(L)$ is not empty if and only if for each pair $\{p,q\}$ of different states $p,q\in Q$ there is a word $u$ such $\{p,q\}\dt u\subseteq \{s,r\}$ for some $r\in Q$.
\end{lemma}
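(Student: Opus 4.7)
The plan is to reduce the statement to Lemma~\ref{lem: const}, which in the presence of a non-accepting sink $s$ identifies $C(L)$ with the set of words $w$ satisfying $|Q\dt w|\le 2$; equivalently, since $s\dt w=s$ and $s\in Q$, with those $w$ for which $Q\dt w\subseteq\{s,r\}$ for some $r\in Q$. The forward implication is then immediate: a single constant $w$ witnessing $C(L)\ne\emptyset$ works uniformly for every pair, since $\{p,q\}\dt w\subseteq Q\dt w\subseteq\{s,r\}$.

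For the converse I would prove by induction on $|S|$ the following strengthened statement: for every $S\subseteq Q$ with $s\in S$ there exists a word $w_S\in\Sigma^*$ such that $S\dt w_S\subseteq\{s,r\}$ for some $r\in Q$; applied to $S=Q$ this yields the desired constant. The base case $|S|\le 2$ is trivial with $w_S=\varepsilon$. For the inductive step, assume $|S|\ge 3$ and pick two distinct non-sink states $p,q\in S\setminus\{s\}$. The hypothesis applied to the pair $\{p,q\}$ produces a word $u$ with $\{p\dt u,q\dt u\}\subseteq\{s,r\}$ for some $r\in Q$. The key claim is that $|S\dt u|\le |S|-1$: either $p\dt u=q\dt u$, which merges the distinct elements $p$ and $q$ of $S$, or $\{p\dt u,q\dt u\}=\{s,r\}$ with $r\ne s$, in which case whichever of $p,q$ is mapped to $s$ is merged with $s\in S$ under $u$ (using $s\dt u=s$). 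Since $s\in S\dt u$ and $|S\dt u|\le|S|-1$, the inductive hypothesis supplies a word $v$ with $(S\dt u)\dt v\subseteq\{s,r'\}$, and $w_S=uv$ completes the step.

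The only delicate point — and the only place where the hypothesis of a non-accepting sink is really exploited — is the second alternative in the inductive step: the pairwise condition does not guarantee that $p$ and $q$ merge, only that they land in some set of the form $\{s,r\}$; the reduction nevertheless succeeds because the sink $s$ is already present in every iterated image of $Q$, so any element sent to $s$ is absorbed by $s$ for free. The argument is a small variant of the classical pair-compression scheme used to prove that a DFA is synchronizing iff every pair of states can be synchronized, adapted here to compression into a two-element set containing $s$ rather than into a singleton.
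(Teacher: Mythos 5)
Your proof is correct and takes essentially the same route as the paper's: both use pair compression, shrinking the image by applying the pairwise hypothesis to two non-sink states (exploiting that the sink $s$ persists in every image) until a word of image size at most two is obtained, then invoke Lemma~\ref{lem: const}. Your inductive phrasing over $s$-containing subsets is a tidier rendering of the paper's informal iteration and is in fact slightly more careful at the outset, where the paper's proof should (but does not explicitly) restrict the initial pair to non-sink states to guarantee that the image actually shrinks.
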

\begin{proof}
Clearly, if $C(L)\neq\emptyset$ the desired property holds by Lemma~\ref{lem: const}. Conversely, take any pair $\{p,q\}$ of different states, then there is a word $w_1\in\Sigma^*$ such that $\{p,q\}\dt w_1\subseteq\{s,r\}$ for some $r\in Q$. We clearly have $|Q\dt w_1|<|Q|$. Consider now the set $Q\dt w_1$. If $|Q\dt w_1|\leq 2$ then $w_1\in C(L)$, so we are done. Otherwise, if $|Q\dt w_1|>2$ then take again any two different states $p',q'\in Q\dt w_1$ such that $p',q'\neq s$. Hence there is a word $w_2\in \Sigma^*$ such that $\{p',q'\}\dt w_2\subseteq \{s,r'\}$ for some $r'\in Q$. We have the inequality $|Q\dt w_1w_2|<|Q\dt w_1|<|Q|$. Consider now the set $Q\dt w_1w_2$. If $|Q\dt w_1w_2|\leq 2$ then $w_1w_2\in C(L)$, so we are done. Arguing by induction we get, through a finite number of steps as described above,  a word $w$ such that $|Q\dt w|\le 2$. That is $w\in C(L)$.\qed
\end{proof}

Recall that for a given DFA $\mathscr{A}=\la Q,\Sigma,\delta,q_0,F\ra$ the \emph{power automaton} $\mathcal{P}(\mathscr{A})$ is constructed as follows. Its state set $\mathcal{Q}$ includes all non-empty subsets of $Q$ and the transition function is a natural extension of $\delta$ on the set $\mathcal{Q}\times\Sigma$. The latter function is still denoted by $\delta$. Denote by $\mathcal{P}^{[2]}(\mathscr{A})$ the subautomaton of the power automaton $\mathcal{P}(\mathscr{A})$ consisting only of 2-element and 1-element subsets of $Q$.
\begin{proposition}\label{prop: const alg}
CONSTANT can be solved in time $O(n^5\cdot|\Sigma|)$, where $n=|Q|$.
\end{proposition}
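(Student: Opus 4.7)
The plan is to use Lemma~\ref{lem: const characterization} to recast CONSTANT as a reachability problem in the pair subautomaton $\mathcal{P}^{[2]}(\mathscr{A}_L)$, and to solve that problem by a standard backward breadth-first search. First, I would detect in $O(n|\Sigma|)$ time whether $\mathscr{A}_L$ has a non-accepting sink state $s$. If no such sink exists, Lemma~\ref{lem: const} tells us that every constant is already a reset word, so CONSTANT reduces to testing whether $\mathscr{A}_L$ is synchronizing, which can be done in $O(n^2|\Sigma|)$ time by the classical backward-BFS algorithm on the pair automaton from the diagonal.

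Assume now that $\mathscr{A}_L$ does admit a non-accepting sink $s$. I would construct $\mathcal{P}^{[2]}(\mathscr{A}_L)$ explicitly: its state set has cardinality at most $\binom{n}{2}+n=O(n^2)$, and its transition table can be computed in $O(n^2|\Sigma|)$ time since each transition of a subset of size $\le 2$ is determined by at most two transitions of $\mathscr{A}_L$. Inside $\mathcal{P}^{[2]}(\mathscr{A}_L)$ I declare a state to be \emph{good} if it is either a singleton $\{r\}$ or a pair of the form $\{s,r\}$; by Lemma~\ref{lem: const characterization}, the property $C(L)\neq\emptyset$ is equivalent to requiring that every $2$-element state of $\mathcal{P}^{[2]}(\mathscr{A}_L)$ can reach some good state along a directed path.

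To verify this reachability condition I would run a backward BFS from the set of good states in $\mathcal{P}^{[2]}(\mathscr{A}_L)$, iteratively marking every state admitting an outgoing transition to an already-marked state, and finally return YES if and only if every $2$-element subset gets marked. The correctness follows immediately from Lemma~\ref{lem: const characterization}: the marked states are exactly the subsets $\{p,q\}$ for which there is a word $u$ with $\{p,q\}\dt u\subseteq\{s,r\}$ for some $r$. As for the running time, even the coarse implementation that iterates up to $O(n^2)$ rounds of rescanning the $O(n^2)$ pair-states and their $O(|\Sigma|)$ transitions—with a constant-factor overhead coming from the list-based representation of subsets—comfortably fits within the claimed $O(n^5|\Sigma|)$ budget; a more careful amortized analysis of the backward BFS in fact yields $O(n^2|\Sigma|)$.

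No genuine obstacle is expected here, since the work has been done in Lemma~\ref{lem: const characterization}: once the characterization in terms of reducibility to pairs $\{s,r\}$ is available, the algorithm is essentially forced. The only point requiring care is the correct specification of the target set of good states (singletons together with sink-containing pairs) and the case split around the existence of a non-accepting sink; the rest is a routine reachability computation on a polynomial-size auxiliary automaton.
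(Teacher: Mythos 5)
Your proof is correct and rests on the same structural ingredients as the paper (Lemma~\ref{lem: const characterization} together with the pair automaton $\mathcal{P}^{[2]}(\mathscr{A}_L)$), but the algorithmic step is genuinely different and in fact cleaner. The paper enumerates, for every source pair $\{p,q\}$ with $p,q\ne s$ and every target $r$, the languages $L_{p,q,r,s}$, $L_{p,q,r}$, $L_{p,q,s}$, and runs a separate forward BFS in $\mathcal{P}^{[2]}(\mathscr{A}_L)$ for each of the $\Theta(n^3)$ source--target choices; each BFS costs $O(n^2|\Sigma|)$, which is what drives the stated $O(n^5|\Sigma|)$ bound. You instead observe that the condition of Lemma~\ref{lem: const characterization} is a uniform reachability condition to a single target set of ``good'' states (singletons and sink-containing pairs), so one backward BFS from that set suffices. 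Provided the reversed adjacency lists of $\mathcal{P}^{[2]}(\mathscr{A}_L)$ are built first (which costs the same $O(n^2|\Sigma|)$ as building the forward automaton), this yields a single $O(n^2|\Sigma|)$ traversal, strictly improving on the paper's implementation while trivially staying within the claimed budget. Your case split when $\mathscr{A}_L$ has no non-accepting sink also matches the paper's treatment (reduce to the usual synchronization test). In short: same characterization lemma, same auxiliary automaton, but a one-shot backward BFS in place of $\Theta(n^3)$ independent forward searches, which gives a tighter running time than the paper actually proves.
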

\begin{proof}
We use Lemma~\ref{lem: const characterization} to establish nonemptiness of the set $C(L)$. First we build the corresponding automaton $\mathcal{P}^{[2]}(\mathscr{A})$ that can be done in time $O(n^2\cdot|\Sigma|)$. This automaton has $\frac{n(n+1)}{2}$ states. Take any pair $\{p,q\}$ of different states $p,q\in Q$, $p,q\neq s$. Take any pair $\{r,s\}$, $r\neq s$. We put $L_{p,q,r,s}=\{w\mid \{p,q\}\dt w=\{r,s\}\}$, $L_{p,q,r}=\{w\mid \{p,q\}\dt w=\{r\}\}$, $L_{p,q,s}=\{w\mid \{p,q\}\dt w=\{s\}\}$. Nonemptiness of any of these three sets can be checked in time $O(n^2\cdot|\Sigma|)$ by a breadth first search in $\mathcal{P}^{[2]}(\mathscr{A})$. The latter may be done for all possible pairs $\{p,q\}$ and $\{r,s\}$ (in the worst case). Since there are $\frac{n(n-1)^2}{2}$ possible choices for the pairs $\{p,q\}$ and $\{r,s\}$, we get a cost of $O(n^5\cdot|\Sigma|)$. Finally, we obtain that it takes $O(n^5\cdot|\Sigma|)$ time to solve CONSTANT.
\end{proof}

\textbf{Remark.} Some partial results of the paper have been presented on the Third Russian Finnish Symposium on Discrete Mathematics RuFiDiM2014. The conference provided local proceedings (not indexed) in which we have presented an extended abstract of the communication without any proof.

\section*{Acknowledgements}
The first author acknowledges support from the Presidential Programme for young researchers, grant MK-3160.2014.1, from the Presidential Programme ``Leading Scientific Schools of the Russian Federation'', project no.\ 5161.2014.1, and from the Russian Foundation for Basic Research, project no.\ 13-01-00852.
The last author acknowledges support from the European Regional Development Fund through the
programme COMPETE and by the Portuguese Government through the FCT -- Funda\c c\~ao para a Ci\^encia e a Tecnologia under the project PEst-C/MAT/UI0144/2013 as well as support from the FCT project SFRH/BPD/65428/2009.

\newpage
\textbf{Appendix}

Let us consider an automaton $\mathrsfs{A}=\langle Q,\Sigma,\delta\rangle$. For a subset $S\subseteq Q$, denote by $\Fix(S)$ the set of words $u\in\Sigma^{*}$ such that $S\dt u=S$, and by $\Syn(S)$ the set of words $\{u\in\Sigma^*$ such that $\mid |S\dt u|=1\}$. A subset $S\subseteq Q$ is called \emph{reachable} if $Q\dt u=S$ for some $u\in\Sigma^*$.
We shall use the equality $m(u^{\ell})=m(u)$ for any $\ell\ge 1$ and $u\in\Sigma^*$ (see \cite[Lemma 3]{PriRoMinWords}). The class of finitely generated synchronizing automata has the following combinatorial characterization.
\begin{theorem}\cite[Theorem 1]{PriRoMinWords}\label{theo: characterization}
A synchronizing automaton $\mathrsfs{A}=\langle Q,\Sigma,\delta\rangle$ is finitely generated if and only if, for any reachable subset $S\subseteq Q$ with $1<|S|<|Q|$ and for any $u\in\Fix(S)$ it holds that $\Syn(S)=\Syn(m(u))$.
\end{theorem}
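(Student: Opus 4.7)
My plan is to prove the equivalence in two directions, building on the standing structural facts about $m(u)$.

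First, I would record the easy inclusion that holds in both directions. If $u \in \Fix(S)$, then iterating gives $S = S \cdot u^{k(u)} \subseteq Q \cdot u^{k(u)} = m(u)$, so $S \subseteq m(u)$; consequently any word synchronizing $m(u)$ also synchronizes $S$, giving $\Syn(m(u)) \subseteq \Syn(S)$ with no hypothesis on $\mathrsfs{A}$ at all. Thus the content of the theorem lies in the reverse inclusion $\Syn(S) \subseteq \Syn(m(u))$.

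For the direction ($\Rightarrow$), I assume $\Syn(\mathrsfs{A}) = \Sigma^{*}U\Sigma^{*}$ for a finite $U$, pick $w \in \Syn(S)$, and aim for $w \in \Syn(m(u))$. The key structural observation is that $u$ acts as a permutation on $m(u)$, so there exists $p \ge 1$ with $u^{p}$ the identity on $m(u)$. Choose $v$ with $Q \cdot v = S$; for every $N \ge 0$ the word $vu^{pN}w$ is in $\Syn(\mathrsfs{A})$, because $Q \cdot v = S$, $S \cdot u^{pN} = S$, and $S \cdot w$ is a singleton. So each $vu^{pN}w$ contains some $g_N \in U$ as a factor, and by pigeonhole (with $U$ finite) a single $g \in U$ occurs as a factor for infinitely many $N$. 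I would then rule out placements of $g$: it cannot lie inside $v$ (else $v \in \Syn(\mathrsfs{A})$, contradicting $|S|>1$); it cannot lie inside a power of $u$ (else some $u^{M} \in \Syn(\mathrsfs{A})$, forcing $|m(u)| = 1$ and hence $|S| = 1$); so for infinitely many $N$, $g$ either sits entirely inside $w$ (in which case $w \in \Syn(\mathrsfs{A}) \subseteq \Syn(m(u))$ and we are done) or straddles the right boundary as $g = g_{1}g_{2}$ with $g_{1}$ a suffix of $u^{pN}$ and $g_{2}$ a nonempty prefix of $w$. In the boundary case I would use that $u^{p}$ is the identity on $m(u)$ together with $g_{1}g_{2} \in \Syn(\mathrsfs{A})$ to show $m(u) \cdot g_{2}$ is a singleton, and then $w = g_{2}w'$ forces $m(u) \cdot w$ to be a singleton as well. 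I expect this boundary case to be the main technical obstacle, since one has to be careful about how $g_{1}$ acts on $m(u)$ (it is only a suffix of $u^{pN}$, not a full power).

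For the direction ($\Leftarrow$), I proceed by contraposition: assume $\Syn(\mathrsfs{A})$ is not finitely generated, and produce $S$, $u \in \Fix(S)$ violating the equality $\Syn(S) = \Syn(m(u))$. Non-finite generation yields an infinite set of minimal reset words $\{w_{n}\}$ of unbounded length. Along such a sequence I would track reachable images $Q \cdot x$ as $x$ runs through prefixes; since there are only finitely many reachable subsets, some reachable $S$ with $1 < |S| < |Q|$ is visited by two different prefixes of infinitely many $w_{n}$'s, and the piece between two such visits is a word $u \in \Fix(S)$. Then the suffix $y$ of $w_{n}$ after the second visit to $S$ lies in $\Syn(S)$, and one shows that the minimality of $w_{n}$ prevents $y$ from synchronizing $m(u)$ (otherwise one could replace the long internal $u^{\ell}$-block by $\varepsilon$ and shorten $w_{n}$ while remaining reset), exhibiting $y \in \Syn(S) \setminus \Syn(m(u))$. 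The hard part here is making the extraction of the repeated reachable subset precise enough to produce $u \in \Fix(S)$ and to turn minimality of $w_{n}$ into the strict separation $y \notin \Syn(m(u))$.
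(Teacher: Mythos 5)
The paper does not prove this statement: it is quoted verbatim from \cite{PriRoMinWords} (Theorem~1 there) and used as a black box, so there is no internal proof to compare against. Judging your attempt on its own merits: the $(\Rightarrow)$ direction is essentially a sound plan. You correctly isolate the easy inclusion $\Syn(m(u))\subseteq\Syn(S)$ (from $S=S\dt u^{k(u)}\subseteq Q\dt u^{k(u)}=m(u)$), and the pigeonhole argument on a fixed generator $g\in U$ inside the words $vu^{pN}w$ is the right idea. You should also explicitly kill the case where $g$ straddles the $v$--$u^{pN}$ boundary (then $g\le_f vu^{pN}$, so $vu^{pN}\in\Syn(\mathrsfs{A})$, yet $Q\dt vu^{pN}=S$ has $|S|>1$), and in the boundary case $g=g_1g_2$ with $g_1\le_s u^{pN}$ the correct manipulation is to write $u^{pN}=hg_1$, note that $\cdot g_1$ maps $m(u)\dt h$ bijectively onto $m(u)$, and hence $m(u)\dt g_2=(m(u)\dt h)\dt g_1g_2\subseteq Q\dt g=\{\text{pt}\}$; this is exactly the care you flag as the ``main technical obstacle'' and it does close.

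The $(\Leftarrow)$ direction, however, has a real gap. You decompose a long minimal reset word as $w_n=xuy$ with $Q\dt x=S=Q\dt xu$, and claim that ``one shows that the minimality of $w_n$ prevents $y$ from synchronizing $m(u)$ (otherwise one could replace the long internal $u^{\ell}$-block by $\varepsilon$ and shorten $w_n$ while remaining reset).'' That replacement produces $xy$, which is indeed a reset word (since $Q\dt xy=S\dt y$ is a singleton), but $xy$ is not a factor, prefix, or suffix of $w_n$, so its being reset does not contradict minimality of $w_n$ in any of the usual senses (no proper factor/prefix/suffix is reset). More to the point, what you actually get from $y\in\Syn(m(u))$ is that $u^{k(u)}y$ is a reset word (since $m(u)=Q\dt u^{k(u)}$), and $u^{k(u)}y$ is again not a factor of $w_n$ unless $k(u)=1$ and $i>0$, or unless $x$ happens to end in $u^{k(u)-1}$. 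In general $k(u)>1$, so no contradiction arises, and hence this step does not establish $y\in\Syn(S)\setminus\Syn(m(u))$. You need a genuinely different argument here (e.g.\ choosing the repeated set and loop word so that $Q\dt u=m(u)$, or a recursion/counting scheme that bounds $|w_n|$ directly); as written the $(\Leftarrow)$ direction is not proved.
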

The \emph{deficiency} of a word $u\in\Sigma^*$ with respect to $\mathrsfs{A}$ is the difference $\df(w)=|Q|-|Q\dt w|$. We make of use the following result from~\cite{MPV04}.
\begin{theorem}\label{Theo: definciency}
Given a  synchronizing automaton ${\mathrsfs A}=\langle
Q,\Sigma,\delta\rangle$ and the words $u,v\in\Sigma^+$ such that
$\df(u)=\df(v)=k>1$, there exists a word $\tau$, with $|\tau|\le
k+1$, such that $\df(u\tau v)>k$.
\end{theorem}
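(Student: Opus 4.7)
The plan is to translate the statement into a subset-dynamics question and then apply a potential-function / pair-automaton argument. Write $n=|Q|$, $S=Q\dt u$ and $T=Q\dt v$, so $|S|=|T|=n-k$, and note that $Q\dt u\tau v=S\dt\tau v\subseteq T$. Hence $\df(u\tau v)>k$ is equivalent to the existence of two distinct $s_1,s_2\in S$ with $s_1\dt\tau v=s_2\dt\tau v$. So I would reduce the problem to finding a short word $\tau$ that either (a) acts non-injectively on $S$, making $|S\dt\tau|<n-k$, or (b) pushes some pair from $\binom{S}{2}$ to a pair $\{p,q\}$ collapsed by $v$ (i.e.\ $p\dt v=q\dt v$).

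I would first dispose of the case $\tau=\varepsilon$: if $v$ is already non-injective on $S$, we are done. Otherwise $v$ restricts to a bijection $S\to T$, which means $S$ is a transversal of the fibre partition of $v$. The same reasoning then propagates: as long as $S\dt\tau$ stays a transversal and $\tau$ stays injective on $S$, neither regime (a) nor (b) triggers, so the candidate witness for $\tau$ must ``escape'' the transversal regime at some step, and our task becomes bounding when this must happen.

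The heart of the proof would be a potential / BFS argument on the pair automaton $\mathcal{P}^{[2]}(\mathrsfs{A})$ starting from $R_0=\binom{S}{2}$. Because $\mathrsfs{A}$ is synchronizing, iterating the action of $\Sigma$ on $R_0$ must eventually yield a singleton (case (a)) or reach $M_v=\{\{p,q\}:p\dt v=q\dt v\}$ (case (b)). The main obstacle, and the point where the hypothesis $\df(u)=\df(v)=k$ is truly used, is improving the generic $O(n^2)$ BFS bound down to the sharp value $k+1$. I would attach to each transversal $X=S\dt\tau$ a non-negative integer potential $\Phi(X)\le k$ measuring how far $X$ is from failing to be a transversal after the next letter, defined via the interplay between $Q\setminus X$ (of size $k$) and the $v$-fibres. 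Synchronization guarantees that for every letter $a$, either $X\dt a$ is no longer a transversal (giving (a) or (b)) or $\Phi(X\dt a)<\Phi(X)$, forcing the process to terminate within $k+1$ steps and producing the desired $\tau$.
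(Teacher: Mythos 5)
The paper does not actually prove this statement: it is quoted as Theorem~\ref{Theo: definciency} directly from~\cite{MPV04} (Margolis--Pin--Volkov), with no in-paper proof to compare against. Your proposal must therefore be judged on its own.

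Your reduction is sound: with $S=Q\dt u$ and $|S|=n-k$, the goal is to find $\tau$ with $|\tau|\le k+1$ such that $\tau v$ is non-injective on $S$, and the base case $\tau=\varepsilon$ is handled correctly (if $v$ is injective on $S$, then $S$ is a transversal of $\ker v$). However the heart of the argument has a genuine gap. You never define the potential $\Phi$, and the property you require of it --- that for \emph{every} letter $a$, if $X\dt a$ is still a transversal then $\Phi(X\dt a)<\Phi(X)$ --- cannot hold in general. Take a letter $a$ that acts as a permutation of $Q$ (permutation letters are ubiquitous in synchronizing automata, e.g.\ the cyclic letter of the \v{C}ern\'{y} series). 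It is easy to set up a $\ker v$-partition and a transversal $X$ so that $X\dt a^i$ remains a transversal for all $i$ up to the order $m$ of $a$; then $X\dt a^m=X$ while your claimed monotonicity forces $\Phi(X)=\Phi(X\dt a^m)<\Phi(X)$, a contradiction. Concretely, with $Q=\{1,2,3,4\}$, $a$ the $4$-cycle, $\ker v=\{\{1,3\},\{2,4\}\}$ and $X=\{1,2\}$, every $X\dt a^i$ is a transversal of $\ker v$, and this configuration can sit inside a synchronizing automaton once a collapsing letter $b$ is added. Synchronization guarantees that \emph{some} word eventually shrinks the image, not that every letter makes progress for an integer potential bounded by $k$; that uniform-decrease claim is precisely where the $k+1$ bound would have to be earned, and it is asserted rather than proved. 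As written, the proposal replaces the actual content of the theorem with an assumed (and false) monotonicity property and so is not a proof.
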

Now we are in position to prove Theorem~\ref{Theo: form of synch word}.

\noindent \textbf{Theorem~\ref{Theo: form of synch word}.}
\emph{Let ${\mathrsfs A}=\langle
Q,\Sigma,\delta\rangle$ be a finitely generated synchronizing automaton with $|Q|=n$. Then for every word $v\in\Sigma^{+}$ we have that either the word $v^{k(v)}$ is reset for $\mathscr{A}$, or there is a word $\tau$ with $|\tau|\le n-1$, such that $v^{k(v)}\tau v^{k(v)}$ is a reset word for $\mathscr{A}$.}

\noindent \begin{proof}
Let us take an arbitrary word $v\in\Sigma^{+}$. If $|m(v)|=1$ then $v^{k(v)}\in\Syn(\mathscr{A})$, so we are done. Now we may assume that $|m(v)|>1$. We construct the following set
$$
\textsc{Reach}(v)=\{S\subseteq m(v)\mid S=m(v)\dt u,
u\in\Sigma^*, |S|>1\}
$$
which is non-empty since $m(v)\in\textsc{Reach}(v)$. By Theorem \ref{theo: characterization} for any $S\in\textsc{Reach}(v)$ it holds that
\begin{equation}\label{eq:Prop_Reach}
\Syn(S)= \Syn(m(v)).
\end{equation}
Indeed, since $S\subseteq m(v)$ we have
$v^{\ell}\in \Fix(S)$ for some integer $\ell>1$. On the other hand, synchronizing DFA $\mathrsfs{A}$ is finitely generated. Thus applying
Theorem \ref{theo: characterization} we get $ \Syn(S)= \Syn(m(v^{\ell}))= \Syn(m(v))$. Now let $H=Q\dt v^{k(v)}u$ be an element of $\textsc{Reach}(v)$ of
minimal cardinality and let $k'=n-|H|=\df(v^{k(v)}u)$. Since
$|H|>1$ we have $k'\le n-2$. Since $\mathrsfs{A}$ is
synchronizing we have by Theorem~\ref{Theo: definciency} that there is a word $\tau$
with $|\tau|\le k'+1\le n-1$ such that $\df(v^{k(v)}u\tau v^{k(v)}u)>k'$,
i.e.\ $|Q\dt v^{k(v)}u\tau v^{k(v)}u|<n-k'=|H|$. We claim that the
word $v^{k(v)}u\tau v^{k(v)}uv^{k(v)}$ is reset for $\mathscr{A}$. Indeed,  $Q\dt v^{k(v)}u\tau
v^{k(v)}u\subseteq Q$, thus $Q\dt v^{k(v)}u\tau
v^{k(v)}uv^{k(v)}\subseteq Q\dt v^{k(v)}=m(v)$. So we obtain that $Q\dt v^{k(v)}{u}\tau
v^{k(v)}{u}v^{k(v)}$ is an element of $\textsc{Reach}(v)$ with
$$
|Q\dt v^{k(v)}u\tau a^{k(v)}uv^{k(v)}|\le|Q\dt v^{k(v)}u\tau
v^{k(v)}u|<|H|.
$$
Hence by the choice of $H$ we get $|Q\dt v^{k(v)}u\tau v^{k(v)}uv^{k(v)}|=1$, that is the word $v^{k(v)}u\tau v^{k(v)}uv^{k(v)}$ is reset for $\mathscr{A}$. In fact even the word $v^{k(v)}u\tau v^{k(v)}$ is reset for $\mathscr{A}$. Indeed, consider
the set $S=Q\dt v^{k(v)}u\tau v^{k(v)}$. Let us assume that $|S|>1$. In this case it holds that $S\in
\textsc{Reach}(v)$, hence by (\ref{eq:Prop_Reach}) we obtain $uv^{k(v)}\in \Syn(S)= \Syn(m(v))$, so
$$1=|m(v)\dt
uv^{k(v)}|=|Q\dt v^{k(v)}uv^{k(v)}|=|H\dt v^{k(v)}|.$$
But by the choice of $H$ we have the inequality $|H|>1$. Furthermore, $H\subseteq m(v)$. On the other hand, $v$ acts as a permutation on $m(v)$. Therefore, we have $|H\dt v^{k(v)}|=|H|>1$, which is a contradiction and we get $|S|=1$. Thus, since $S=Q\dt v^{k(v)}u\tau v^{k(v)}=H\dt \tau v^{k(v)}$, by (\ref{eq:Prop_Reach}) we obtain
$$
\tau v^{k(v)}\in \Syn(Q\dt v^{k(v)}u)= \Syn(H)= \Syn(m(v))= \Syn(Q\dt v^{k(v)}),
$$
i.e. $v^{k(v)}\tau v^{k(v)}$ is a reset word for $\mathscr{A}$.\qed
\end{proof}

\textbf{Lemma~\ref{lem: charact wedge}.}
\emph{For any $u,v,w\in\Sigma^{*}$, $(uv)\wedge_{s} w=((u\wedge_{s} w)v)\wedge_{s}w$. Furthermore, for any $v$ with $|v|\ge w$, $(uv)\wedge_{s} w=v\wedge_{s}w$.}

\begin{proof}
Let $t=(uv)\wedge_{s} w$. If $t<_{s} v$, then it is easy to see that $t=hv\wedge_{s}w$ where $h$ is an arbitrary suffix of $u$. In particular, we have $t=((u\wedge_{s} w)v)\wedge_{s}w$. Thus we can assume that $t<_{s}uv$ and there is a non-empty word $r\in\Sigma^{+}$ such that $r\le _{s}u$, $r\le_{p}w$ and $t=rv$. Hence $r\le_{s}(u\wedge_{s} w)$. Since $t$ is the maximal suffix of $uv$ which is also a prefix of $w$ and $r\le_{s}(u\wedge_{s} w)\le_{s}u$ we get that $t$ is also the maximal suffix of $(u\wedge_{s} w)v$ which is also a prefix of $w$, i.e. $t=((u\wedge_{s} w)v)\wedge_{s}w$. The last statement of the lemma follows trivially from the definition.\qed
\end{proof}

\textbf{Lemma~\ref{lem: prop sigma_w}.}
\emph{Let $\mathrsfs{A}=\la Q,\Sigma,\delta\ra$ be a strongly connected DFA, and $\{I_{i}\}_{i\in Q}$ its associated reset left regular decomposition.
The relation $\sigma_w$ is a right congruence on $I$. Furthermore, each $\sigma_w$-class is a left ideal contained in $I_{i}$ for some $i\in Q$.
}
\begin{proof}
Clearly, $\sigma_w$ is an equivalence relation on $I$. Let $a\in\Sigma$ and $(u,v)\in\sigma_w$, i.e. $u,v\in I_i$ for some $i\in Q$ and $u\wedge_{s}w=v\wedge_{s}w$. By property $i)$ of Definition \ref{defn: regular dec} we have $ua,va\in I_{i}a\subseteq I_{j}$ for some $j\in Q$. Furthermore, by Lemma \ref{lem: charact wedge} and $u\wedge_{s}w=v\wedge_{s}w$ we have
$$
(ua)\wedge_{s}w=((u\wedge_{s} w)a)\wedge_{s}w=((v\wedge_{s} w)a)\wedge_{s}w=(va)\wedge_{s}w
$$
hence $(ua,va)\in\sigma_{w}$.
Since the number of possible prefixes of $w$ is finite, by the definition of $\wedge_{s}$ we have that $\sigma_{w}$ has finite index. Take any $u\in I$, denote by $[u]$ the $\sigma_w$-class of $u$. Clearly, $[u]\subseteq I_i$ for some $i\in Q$. Since $w$ is a reset word for $\mathscr{A}$ of minimum length, for each word $u\in I$ we have $|u|\ge |w|$, thus for each $v\in\Sigma^*$ we have $(vu)\wedge_s w=u\wedge_s w$ by Lemma~\ref{lem: charact wedge}. Hence [u] is a left ideal in $\Sigma^*$ contained in $I_{i}$ for some $i\in Q$.\qed
\end{proof}

\textbf{Theorem~\ref{theo: lifting property}.}
\emph{
Let $\mathrsfs{A}=\la Q,\Sigma,\delta\ra$ be a strongly connected synchronizing automaton. For any reset word $w$ of minimum length, there is a DFA $\mathrsfs{B}\in\mathcal{L}(\Sigma)$ with $L[\mathrsfs{B}]=w^{-1}\Sigma^*w$ and
$$
\Sigma^{*}w\Sigma^{*}\subseteq \Syn(\mathrsfs{B})\subseteq \Syn(\mathrsfs{A})
$$
such that $\mathrsfs{A}$ is a homomorphic image of $\mathrsfs{B}$.}

\begin{proof}
Let $\mathrsfs{A}=\la Q,\Sigma,\delta\ra$ be a strongly connected synchronizing automaton.
By Theorem~\ref{theo: characterization 2} we can build for $\mathrsfs{A}$ the associated reset left regular decomposition $\mathcal{I}(\mathrsfs{A})=\{I_{i}\}_{i\in Q}$ where $\uplus_{i\in Q}I_{i}=I=\Syn(\mathrsfs{A})$. Since $w\in I$, there is some $j\in Q$ such that $w\in I_j$ and thus $\Sigma^{*}w\subseteq I_{j}$. Let $\sigma_{w}$ be a binary relation on $I$ defined by~(\ref{eq: sigma w}). By Lemma~\ref{lem: prop sigma_w} each $\sigma_w$-class is a left ideal contained in some $I_i$ for some $i\in Q$.

Therefore $\sigma_{w}$ induces a refinement $\{J_{t}\}_{t\in T}$ of $\{I_{i}\}_{i\in Q}$ for some set of indices $T=\{v_{0},\ldots,v_{m}\}$. Since $\sigma_{w}$ is a right congruence, for any $v_{i}\in T, a\in\Sigma$ we have $J_{v_{i}}a\subseteq J_{v_{j}}$ for some  $T=\{v_0,\ldots,v_m\}$. Thus $\Sigma$ defines an action $\lambda$ on $T$ defined by $\lambda(v_{i},a)=v_{h}$ where $v_{h}$ is the unique index of $T$ such that $J_{v_{i}}a\subseteq J_{v_{h}}$. Using a simple induction on the length of the words it is straightforward to check that the following condition holds
\begin{equation}\label{eq: lambda ext}
\lambda(v_{i},u)=v_{t},\:u\in\Sigma^{*}\mbox{ iff } J_{v_{i}}u\subseteq J_{v_{t}}
\end{equation}
Note that $\Sigma^{*}w$ is a $\sigma_w$-class belonging to $\{J_{t}\}_{t\in T}$, say $\Sigma^*w=J_{v_{0}}$. Therefore, consider the DFA $\mathrsfs{B}=\la H,\Sigma,\lambda, v_{0},\{v_{0}\}\ra$ where
$$
H=\{v_{j}\in T:v_{j}=\lambda(v_{0},u)\mbox{ for some }u\in\Sigma^{*}\}
$$
and let us prove that $\mathrsfs{B}\in\mathcal{L}(\Sigma)$ with $L[\mathrsfs{B}]=w^{-1}\Sigma^*w$. Since $J_{v_{h}}w\subseteq\Sigma^{*}w=J_{v_{0}}$ for any $v_{h}\in H$, then by (\ref{eq: lambda ext}) we have $\lambda(v_h,w)=v_0$, so $\mathrsfs{B}$ is a trim DFA. We now prove the equality $L[\mathrsfs{B}]=w^{-1}\Sigma^*w$. Let $u\in\Sigma^{*}$ such that $\lambda(v_{0}, u)=v_{0}$, by (\ref{eq: lambda ext}) this is equivalent to $\Sigma^{*}wu\subseteq \Sigma^{*}w$, and it is not difficult to see that this is also equivalent to $wu\wedge_{s}w=w$. In the proof of Proposition \ref{prop: language charact. A_w} we have seen that $wu\wedge_{s}w=w$ is equivalent to $\xi(q_{n},u)=q_{n}$, i.e. $u\in L[\mathrsfs{A}_w]=w^{-1}\Sigma^*w$.

The first inclusion in the statement of the theorem $\Sigma^{*}w\Sigma^{*}\subseteq \Syn(\mathrsfs{B})$ is a consequence of Proposition \ref{prop: class property}. Let us prove the second inclusion. The following claim is of use.
\begin{claim}
For any $I_{j}$ with $j\in Q$ there is at least a $\sigma_w$-class $J_{v_{h}}$ such that $J_{v_{h}}\subseteq I_{j}$ for some $v_{h}\in H$.
\end{claim}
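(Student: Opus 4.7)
The plan is to exploit strong connectivity of $\mathrsfs{A}$: since every state of $\mathrsfs{A}$ is reachable from the ``post-$w$'' state, every block $I_j$ of the original decomposition should be hit by some $\sigma_w$-class reachable from $J_{v_0}$ in $\mathrsfs{B}$.

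First, I would locate the block containing $\Sigma^{*}w$. By construction $J_{v_0}=\Sigma^{*}w$, and since $\{J_{t}\}_{t\in T}$ refines $\{I_{i}\}_{i\in Q}$, there is a (unique) index $i_{0}\in Q$ with $J_{v_0}\subseteq I_{i_{0}}$. Now fix an arbitrary $j\in Q$. Because $\mathrsfs{A}$ is strongly connected, there exists a word $u\in\Sigma^{*}$ taking state $i_{0}$ to state $j$ in $\mathrsfs{A}$; translating this via the bijection of Theorem~\ref{theo: characterization 2}, this means exactly that $I_{i_{0}}u\subseteq I_{j}$.

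Next I would transfer this inclusion to $\sigma_w$-classes. Since $J_{v_0}\subseteq I_{i_{0}}$, we get
$$
J_{v_0}u\;\subseteq\;I_{i_{0}}u\;\subseteq\;I_{j}.
$$
On the other hand, setting $v_{h}=\lambda(v_{0},u)$, property~(\ref{eq: lambda ext}) gives $J_{v_0}u\subseteq J_{v_{h}}$. In particular, since $J_{v_0}=\Sigma^{*}w\neq\emptyset$, the class $J_{v_h}$ meets $I_j$. But Lemma~\ref{lem: prop sigma_w} guarantees that each $\sigma_w$-class sits inside exactly one block $I_i$; hence $J_{v_h}\subseteq I_j$. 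Finally, $v_h=\lambda(v_0,u)$ shows that $v_h$ is reached from $v_0$ in $\mathrsfs{B}$, so $v_h\in H$, as required.

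The argument is essentially a one-shot application of strong connectivity plus the extension property~(\ref{eq: lambda ext}); there is no real obstacle, the only point to watch is to invoke Lemma~\ref{lem: prop sigma_w} to pass from ``$J_{v_h}\cap I_j\neq\emptyset$'' to ``$J_{v_h}\subseteq I_j$'', which depends on the $\sigma_w$-classes refining the $I_i$'s, a fact itself ensured by the minimality of $|w|$ among reset words.
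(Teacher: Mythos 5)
Your proof is correct, but the route to the key inclusion differs from the paper's. Where you write ``Because $\mathrsfs{A}$ is strongly connected, there exists a word $u$ taking $i_0$ to $j$, and via the bijection of Theorem~\ref{theo: characterization 2} this gives $I_{i_0}u\subseteq I_j$,'' the paper instead observes that one may take $u$ to be \emph{any} word of $I_j$: since $I_j$ is a (nonempty) left ideal, $\Sigma^*u\subseteq I_j$, hence in particular $I_{i_0}u\subseteq I_j$. Thus the paper never needs to invoke strong connectivity of $\mathrsfs{A}$ at this point, nor to translate a transition in $\mathrsfs{A}$ through $\mathcal{I}$; it works purely at the level of the left ideal axioms of a reset left regular decomposition (Definition~\ref{defn: regular dec}). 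Your detour is harmless --- strong connectivity is a hypothesis of the theorem, and the identity $\delta(i_0,u)=j\Rightarrow I_{i_0}u\subseteq I_j$ is indeed what the bijection guarantees --- but it brings in more machinery than necessary. On the upside, you spell out explicitly the tail of the argument (passing to $v_h=\lambda(v_0,u)$ via~(\ref{eq: lambda ext}), using nonemptiness of $J_{v_0}$ to get $J_{v_h}\cap I_j\neq\emptyset$, then refinement to conclude $J_{v_h}\subseteq I_j$, and $v_h\in H$), all of which the paper leaves implicit after establishing $J_{v_0}u\subseteq I_j$. Both arguments are sound; the paper's is more elementary, yours is more explicit.
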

\begin{proof}
Indeed, this property clearly holds for the left ideal $I_{i}$ containing $J_{v_{0}}=\Sigma^{*}w$. Thus consider any $I_{j}$ for $j\neq i$. Since $I_{j}$ is a left ideal, for any $u\in I_{j}$ we get $I_{i}u\subseteq I_{j}$. In particular we get $J_{v_{0}}u\subseteq I_{i}u\subseteq I_{j}$.\qed
\end{proof}
Take any $u\in\Syn(\mathscr{B})$, thus there exists some $v_i\in T$ such that $J_{v_{k}}u\subseteq J_{{v_{i}}}$ for all $v_k\in T$. Using the Claim we conclude that there exists some $i\in Q$ such that $I_ju\subseteq I_i$ for all $j\in Q$\col{.} Therefore, $Iu\subseteq I_i$ and by condition ii) of Definition~\ref{defn: regular dec} we obtain $u\in \Syn(\mathscr{A})$.

Let us prove the last statement of the theorem. Consider the map $\varphi: H\rf Q$ defined by $\varphi(v_{h})=j$ where $j\in Q$ is the unique index such that $J_{v_{h}}\subseteq I_{j}$. We claim that $\varphi:\mathrsfs{B}\rf\mathrsfs{A}$ is a homomorphism. Indeed, take any $h\in H$, $a\in\Sigma$, and put $t=\lambda(h,a)$, $r=\varphi(t)$, $q=\varphi(h)$. Since $J_h\subseteq I_q$, $J_{h}a\subseteq J_{t}\subseteq I_r$ and $J_{h}a\subseteq I_qa$, then $J_{h}a\subseteq  I_r\cap I_{q}a$. Therefore, by the property of reset left regular decompositions we get $I_{q}a\subseteq I_{r}$, whence $\varphi(\lambda(h,a))=r=\delta(\varphi(h),a)$, and this concludes the proof of the theorem.\qed
\end{proof}


\noindent \textbf{Theorem~\ref{theo: another restatement}.} \emph{Cerny's conjecture holds if and only if for any $\mathrsfs{B}\in\mathcal{L}(\Sigma)$  and $\rho\in\Cong_{k}(\mathrsfs{B})$ for all $k<\sqrt{\| \Syn(\mathrsfs{B})\|}+1$ we have
$$
\|\Syn(\mathrsfs{B}/\rho)\|<\|\Syn(\mathrsfs{B})\|
$$}
\begin{proof}
Since Cerny's conjecture holds if and only if it holds for strongly connected automata, we can suppose without loss of generality that the automata considered are strongly connected. Thus, suppose that Cerny's conjecture holds for strongly connected synchronizing automata and let $\mathrsfs{B}\in\mathcal{L}(\Sigma)$, $\rho\in\Cong_{k}(\mathrsfs{B})$ for some $k<\sqrt{\| \Syn(\mathrsfs{B})\|}+1$. Take $I=\Syn(\mathscr{B}/\rho)$. By Proposition~\ref{prop: class property} $\mathscr{B}$ is strongly connected, thus a quotient automaton $\mathscr{B}/\rho$ is strongly connected. Hence by Theorem~\ref{theo: Cerny rdc reform} we have $k\ge rdc(I) \ge \sqrt{\|I\|}+1$, i.e. $\|\Syn(\mathrsfs{B}/\rho)\|<\|\Syn(\mathrsfs{B})\|$.

Suppose that for any $\mathrsfs{B}\in\mathcal{L}(\Sigma)$ and $\rho\in\Cong_{k}(\mathrsfs{B})$ for all $k<\sqrt{\| \Syn(\mathrsfs{B})\|}+1$ the inequality in the statement of the theorem holds. Let $\mathrsfs{A}$ be a strongly connected synchronizing automaton with $k$ states. Let $w$ be a reset word for $\mathscr{A}$ of minimum length. For the word $w$ we build the automaton $\mathrsfs{B}$ of Theorem~\ref{theo: lifting property} associated to $w$ such that $\mathrsfs{A}$ is a homomorphic image of $\mathrsfs{B}$. Actually from the proof of Theorem~\ref{theo: lifting property} it follows that $\mathscr{A}$ can be viewed as a quotient automaton $\mathscr{B}/\rho$ for some $\rho\in\Cong_{k}(\mathrsfs{B})$. By the same theorem we also have $\Sigma^{*}w\Sigma^{*}\subseteq \Syn(\mathrsfs{B})\subseteq \Syn(\mathrsfs{A})$, hence
$$
\|\Syn(\mathrsfs{A})\|=|w|=\|\Syn(\mathrsfs{B})\|.
$$
Therefore, by the statement of the theorem we must have
$$
k\ge \sqrt{\| \Syn(\mathrsfs{B})\|}+1=\sqrt{\| \Syn(\mathrsfs{A})\|}+1
$$
whence $\mathrsfs{A}$ satisfies Cerny's conjecture.\qed
\end{proof}

\textbf{Proposition~\ref{prop: synt comp quotient}.}
\emph{Let $I=w^{-1}\Sigma^*w$ for some $w\in\Sigma^*$. The syntactic complexity of $I$ is equal to $$\sigma(I)=|w|+1+|\Pref(w)|+|\Fact(w)|+|\Suff(w)|-|\Pref_{syn}(w)|.$$}
\begin{proof}
Let $\mathscr{A}_w=\la P(w),\Sigma,\xi,q_n,\{q_n\}\ra$ be the minimal automaton recognizing~$I$ as in Proposition~\ref{prop: language charact. A_w}. So $P(w)=\{q_{0},\dots, q_{n}\}$ is the set of prefixes of the word $w$, $|q_i|=i$ for all indices $i$, and $\xi(q_{i},a)=(q_{i}a)\wedge_{s} w$ for any $q_i\in P(w)$, $a\in\Sigma$. By Proposition~\ref{prop: class property} $w$ is a reset word for $\mathscr{A}_w$ and $\mathscr{A}_w$ is strongly connected. Thus, since $w\in I$, we have $P(w)\dt w=\{q_n\}$. Furthermore, for each $q_i\in P(w)$ there exists some $u\in\Sigma^*$ such that $q_n\dt u=q_i$, hence $P(w)\dt wu=\{q_i\}$. Note that $|P(w)|=n+1$, so we can find $n+1$ reset words for $\mathscr{A}_w$ defining pairwise different transformations of the automaton.

Take any $u,v\in \Fact(w)$, $u\neq v$. There exist some $q_i,q_j\in P(w)$ such that $q_i\dt u=q_iu=q_{i+|u|}$ and $q_j\dt v=q_jv=q_{j+|v|}$ (see the illustration below).
$$\underbrace{\overbrace{w[1]\ldots w[i]}^{q_i}\overbrace{w[i+1]\ldots w[i+|u|]}^{u}}_{q_{i+|u|}}\quad\quad\quad
\underbrace{\overbrace{w[1]\ldots w[j]}^{q_j}\overbrace{w[j+1]\ldots w[j+|v|]}^{v}}_{q_{j+|v|}}
$$
Clearly, $q_0\dt u<_{p} q_{i+|u|}=q_i\dt u$ and $q_0\dt v<_{p} q_{j+|v|}=q_j\dt v$, hence $u$ and $v$ are not reset words for $\mathscr{A}_w$.
Without loss of generality suppose that $|u|\le |v|$. We show that $u$ and $v$ define different transformations of $\mathscr{A}_w$ by considering the following cases.

\textbf{Case 1.} Assume $|u|<|v|$. If $i=j$ then $q_i\dt u=q_{i+|u|}\neq q_{i+|v|}=q_i\dt v$. If $i<j$ then $q_j\dt u\neq q_j\dt v$ since $q_j\dt u=q_k$ for some $0\le k< n$ and $q_k<_{p}q_jv=q_j\dt v$. If $i>j$ then using an analogous argument we have $q_j\dt u\neq q_j\dt v$.

\textbf{Case 2.} Assume that $|u|=|v|$. If $i=j$ then $q_i\dt u=q_i\dt v$ since $|u|=|v|$. Thus $u=v$, which is a contradiction. If $i<j$ then $q_j\dt u\neq q_j\dt v$ since $q_j\dt u=q_k$ for some $0\le k< n$ and $q_k<_{p}q_jv=q_j\dt v$. If $i>j$ then using the same an analogous argument we have $q_i\dt u\neq q_i\dt v$.

Take any suffixes $s,t\in \Suff(w)$, $s\neq t$. There exist some $q_i,q_j\in P(w)$ such that $q_i\dt s=q_is=w=q_{n}$ and $q_j\dt t=q_jt=w=q_{n}$.
Without loss of generality suppose that $|s|\le|t|$. If $|s|=|t|$ then $s=t$, which is a contradiction. So we may assume $|s|<|t|$. Thus, $q_j\dt s\neq q_j\dt t$ since $q_j\dt s<_{p}w=q_j\dt t$. Therefore, different suffixes define different transformations of $\mathscr{A}_w$. Furthermore, $q_0\dt s\le_{p}s\neq w=q_i\dt s$, so $s\not \in\Syn(\mathscr{A}_w)$. Analogously, $t\not\in\Syn(\mathscr{A}_w)$. It remains to show that there is no proper suffix $t$ defining the same transformation of $\mathscr{A}_w$ as some inner factor different from $t$.
Let $u\in \Fact(w)$, $t\in \Suff(w)$ and $t\neq u$. Again, consider $q_i,q_j\in P(w)$ such that $q_i\dt u=q_iu$ and $q_j\dt t=w=q_n$. If $i\le j$ then, since $u\not\in \Suff(w)$, $q_i\dt u=q_iu<_{p}w=q_j\dt t$. If $i>j$ then $q_j<_{p}q_i$, thus $q_j\dt u<_{p}q_iu=q_i\dt u<_{p}w=q_j\dt t$. Hence, we get that $u$ and $t$ define different transformations of $\mathscr{A}_w$.

Take any prefixes $x,y\in \Pref(w)$, $x\neq y$. Since $q_0\dt x=x\neq y=q_0\dt y$, we get that $x$ and $y$ define different transformations of $\mathscr{A}_w$. We now show that proper prefixes define transformations which differ from transformations defined by any proper factor or a suffix. Indeed, take any two different words $x\in \Pref(w)$ and $u\in \Fact(w)$ such that $u\neq x$. If $|x|\ge |u|$ then $q_0\dt x=x\neq q_0\dt u$. If $|x|<|u|$ then $q_i\dt x\neq q_i\dt u$, where $q_i\dt u=q_iu$ for some $q_i\in P(w)$. Consider now any two different words $x\in \Pref(w)$ and $t\in \Suff(w)$ such that $t\neq x$. If $|x|\ge |t|$ then $q_0\dt x=x\neq q_0\dt t$ (otherwise $x$ would be a suffix of $w$). If $|x|<|t|$ then $q_j\dt x\neq q_j\dt t$, where $q_j\dt t=w$ for some $q_j\in P(w)$. If there is some $x\in\Pref(w)\cap\Pref_{syn}(w)$ then $x\in\Syn(\mathscr{A}_w)$, but each reset word for $\mathscr{A}_w$ belongs to the one of $n+1$ equivalence classes defined earlier. Therefore, we have proved $$\sigma(I)\geq |w|+1+|\Pref(w)|+|\Fact(w)|+|\Suff(w)|-|\Pref_{syn}(w)|$$ Now, take any $z\in\Sigma^*$ such that $z\not\in\Syn(\mathscr{A})\cup\Pref(w)\cup\Fact(w)\cup\Suff(w)$. It remains to show that $z$ does not define a new transformation of $\mathscr{A}_w$ that differs from transformations corresponding to reset words of $\mathscr{A}_w$, prefixes, suffixes or factors of $w$. Note that $z\neq w$ since $w\in\Syn(\mathscr{A}_w)$. If $|z|\geq |w|$ then $z\in\Syn(\mathscr{A}_w)$ since by Lemma~\ref{lem: charact wedge} we have $q_i\dt z=q_iz\wedge_s w=z\wedge_s w$ for all $q_i\in P(w)$, so we are done. Assume that $|z|<|w|$, and put $q_0\dt z=q_k$. The latter means that $q_k$ is the maximal suffix of $z$ which appears in $w$ as a prefix. We may assume that $q_i\dt z=q_j\neq q_k$ for some $q_i\in P(w)$, otherwise $z\in \Syn(w)$. By the definition of $q_k$ we have $q_k<_{s}q_j$. Furthermore, by the definition of $q_j$ we get $q_j<_{s}q_iz$. We have two cases: either $q_j\le_{s} z$, or $z<_{s}q_j$. If $q_j\le_{s} z$ then, since $q_{k}$ is the maximal suffix of $z$ which is also a prefix of $w$, we get $q_{j}\le_{s}q_{k}$, a contradiction.
In the second case we have $z<_f q_j$ then, since $q_j\le_{p} w$, we get $z<_{f}w$, a contradiction. So we have $\sigma(I)= |w|+1+|\Pref(w)|+|\Fact(w)|+|\Suff(w)|-|\Pref_{syn}(w)|$.\qed
\end{proof}


\begin{thebibliography}{99}
\bibitem{Berl}
Berlinkov, M.V.: Testing for synchronization. CoRR abs/1401.2553 (2014)

\bibitem{BFZ}
Bonizzoni, P., De Felice, C., Zizza, Z.: The structure of reflexive regular splicing languages
via Sch\"{u}tzenberger constants. Theor. Comp. Sci., vol. 334, pp. 71--98 (2005)


\bibitem{Ce64}
\v{C}ern\'{y}, J.: Pozn\'{a}mka k homog\'{e}nnym eksperimentom s kone\v{c}n\'{y}mi
automatami. Mat.-Fyz.\ Cas.\
Slovensk.\ Akad.\ Vied., vol. 14, pp. 208--216, (1964) [in Slovak].

\bibitem{Epp}
Eppstein, D.: Reset sequences for monotonic automata. SIAM J. Comput., vol. 19, pp. 500--510, (1990)


\bibitem{GuMasPribFG}
Gusev, V.V., Maslennikova, M.I., Pribavkina, E.V.: Finitely generated ideal languauges and synchronizing automata. In: J.Karhum\"{a}ki, A.Lepisto, L.Zamboni (eds.), WORDS 2013, LNCS, vol. 8079, Springer, pp. 143--153 (2013)

\bibitem{PrincIdFI}
Gusev, V.V., Maslennikova, M.I., Pribavkina, E.V.: Principal Ideal languages and synchronizing automata. In: V.Halava, J.Karhum\"{a}ki, Yu. Matiyasevich (eds.), the Special Issue of the RuFiDiM 2012, Fundamenta Informaticae, vol. 132(1), pp. 95--108 (2014)

\bibitem{HolK3}
Holzer, M., K\"{o}nig, B.: On deterministic finite automata and syntactic monoid size.
Theoret. Comput. Sci. 327 (2004) P. 319--347

\bibitem{SynchSymp}
De Luca, A., Perrin, D., Restivo, A. and Termini, S.: Synchronization and symplification. Discrete Math., vol. 27, pp. 297--308 (1979)

\bibitem{DCFS14}
Maslennikova, M.: Complexity of checking whether two automata are synchronized by the same language. In: H.J\"{u}rgensen, J. Karhum\"{a}ki, Al. Okhotin (eds.), DCFS 2014, LNCS, vol. 8614, Springer, pp. 306--317  (2014)

\bibitem{SOFSEM}
Maslennikova, M.I.: Reset Complexity of Ideal Languages. In: M. Bielikov\'{a} (eds.), SOFSEM 2012, Proc. Institute of Computer Science Academy of Sciences of the Czech Republic, vol. II, pp. 33--44 (2012)

\bibitem{MPV04}
Margolis, S.W., Pin, J.-E., Volkov, M.V.: Words guaranteeing minimum image. Int. J. Found. Comput. Sci., vol. 15(2), pp. 259--276 (2004)

\bibitem{Myhill}
Myhill, J.: Finite automata and representation of events. Wright Air Development
Center Technical Report, 57–624 (1957)

\bibitem{Perrin}
Perrin, D.: Finite automata. Handbook of theoretical computer science, J. van Leewen (eds.), Elsevier, B., pp. 1--57 (1990)


\bibitem{PriRoMinWords} Pribavkina, E., Rodaro, E.: Synchronizing automata with finitely many minimal synchronizing words. Information and Computation, vol. 209(3), pp. 568-579 (2011)

\bibitem{PriRo11}
Pribavkina, E.V., Rodaro, E.: Recognizing synchronizing automata with finitely many minimal synchronizing words is
\textbf{PSPACE}-complete. In B. L\"{o}we (eds.), CiE 2011, LNCS, vol. 6735, Springer-Verlag, Berlin-Heidelberg, pp. 230--238 (2011)

\bibitem{ReRoTech}
Reis, R., Rodaro, E.: Ideal regular languages and strongly connected synchronizing automata \url{http://www.dcc.fc.up.pt/dcc/Pubs/TReports/TR13/dcc-2013-01.pdf}

\bibitem{ReRo13}
Reis, R., Rodaro, E.: Regular ideal languages and synchronizing automata. In: J.~Karhum\"{a}ki, A.~Lepist\"{o}, L.~Zamboni (eds.), WORDS 2013, LNCS, vol. 8079, pp. 205--216, Springer, Heidelberg (2013)

\bibitem{Schutz}
Sch\"{u}tzenberger, M.P.: Sur certains op\'{e}rations de fermeture dans les languages rationnels. Sympos. Math., vol.15, pp. 245--253 (1975) [in French].


\bibitem{Vo_Survey}
Volkov, M.V.: Synchronizing automata and the \v{C}ern\'y
conjecture. In: C.\,Mart\'\i{}n-Vide, F.\,Otto, H.\,Fernau (eds.), Languages
and Automata: Theory and Applications, LATA 2008, LNCS, vol.5196. pp.11--27, Springer, Berlin (2008)

\bibitem{Vo_CIAA07}
Volkov, M.V.: Synchronizing automata preserving a chain of partial orders. Theor. Comp. Sci., vol. 410, pp. 3513--3519 (2009)


\end{thebibliography}
\end{document}